\documentclass[11pt]{article}
\usepackage{amssymb}
\usepackage{eurosym}
\usepackage[usenames,dvipsnames]{color}
\usepackage[centertags]{amsmath}
\usepackage{amsthm}
\usepackage{graphics}
\usepackage{amsmath}
\usepackage{graphicx}
\usepackage{textcomp}
\usepackage{makeidx}
\usepackage{cite}
\usepackage{wrapfig}
\usepackage{graphicx}
\usepackage{setspace}
\usepackage{graphicx}
\usepackage[english]{babel}
\usepackage{graphicx}
\usepackage{subfigure}
\usepackage{subcaption}
\usepackage{xcolor}
\DeclareGraphicsExtensions{.pdf,.jpeg,.png,.jpg}
\usepackage{caption}
\usepackage{float}
\newtheorem{theorem}{Theorem}[section]

\newtheorem{algorithm}[theorem]{Algorithm}

\newtheorem{proposition}[theorem]{Proposition}

\newtheorem{remark}[theorem]{Remark}

\newtheorem{model}[theorem]{Model}
\numberwithin{equation}{section}

\begin{document}

\title{A Mathematical Model for Chemotherapy,  Immunotherapy and Virotherapy Treatments of Cancer}
\author{Tarini Kumar Dutta$^a$, Silmera A Sangma$^a$, Janice Moore$^{b}$ and  Meir Shillor$^{b}$ \\
{\small $^a$ Department of Mathematics, Assam Don Bosco University}\\
{\small Tapesia Gardens, Sonapur Assam-782402, INDIA}\\
{\small $^{b}$Department of Mathematics and Statistics, Oakland University}
\\
{\small 146 Library Drive, Rochester, MI 48309, USA} \\
{\small tarini.kumar.dutta@dbuniversity.ac.in,\quad asangmasilmera@gmail.com}\\
 {\small jhmoore@oakland.edu \quad  shillor@oakland.edu}
}

\maketitle

\begin{abstract}
We continue our study of a model for cancer treatment, constructed in Dutta et. al., 2025, by adding Virotherapy to the Chemotherapy and Immunotherapy studied there. It is a dynamical system model for the spread of cancer in healthy tissue. It allows computer experiments of various combinations of the three modalities, which cannot be performed in the laboratory or experimentally. The novelty is the addition of Virotherapy. The analysis shows that the model solutions exist, are bounded, and nonnegative on each finite time interval, thus biologically feasible.  A time-stepping algorithm is constructed and implemented, and computer simulations are presented. 
The simulations show the development of the disease under various treatment options, including a baseline case without treatment, cases for each of the three treatments separately, and some combinations of the three treatments. {\bf These simulations indicate that combinations of treatments are more effective. However, we do not consider any limitation or incompatibilities of the joint application of the three modalities, that may exist in practice.} Once validated in the field, the model can be used to design treatment schedules of combinations of the three modalities for improved outcomes. 
\end{abstract}

\vskip4pt

\vskip4pt \noindent\textit{AMS classification}:
Primary: 34A12 , 37C20; Secondary, 49K15, 49K40, 92C17, 92C60

\vskip4pt \noindent\textit{Keywords: dynamical system model for cancer, chemotherapy, immunotherapy, virotherapy, computer simulation }


\section{Introduction}
\label{sec:intro}

This work constructs, analyzes, and simulates a new model for the dynamics of a cancer tumor, when chemotherapy, immunotherapy, and virotherapy are used to treat the disease. It continues our study of models for the dynamics of cancer disease, which began in \cite{DMSS25}. There, a model for combined chemotherapy and immunotherapy
treatments was constructed, analyzed, and simulated. Here, we add virotherapy to the model. This makes it necessary to distinguish between infected and uninfected cancer cells, as well as adding an equation for the dynamics of the viruses.  The main aim is to find better ways to schedule the therapy sequence.  To that end, we construct the model, and, once verified in the field, it may help with the optimal scheduling of the three combined treatments. It is a step in our study of cancer treatments by constructing models that simulate various aspects of the dynamics of cancer tumors.
The model here, in addition to its predictive aspects, can be used to gain a deeper understanding of the dynamics of the disease. Surgery, radiation therapy, chemotherapy, and immunotherapy are the most common cancer treatments currently in use. 

Cancer annually causes the deaths of millions of people around the world, and the death rate and the related morbidity have been increasing year after year. The WHO World Cancer Report \cite{WHO22} provides recent information on the prevalence and trends of cancer in the world. There has clearly been considerable progress in the treatment of cancer; however, the current limitations of the scheduling of chemotherapy, immunotherapy, and virotherapy treatments are quite noticeable, and this work addresses the dynamics of the combined modalities. It is a contribution to the growing field of mathematical modeling of the many aspects of cancer. Indeed, mathematical modeling and computer simulations are powerful tools for simulating the dynamics of biological systems, allowing for the study of `What if?' or alternative scenarios, which cannot be performed in the field. This can lead to optimization of treatments and improvement of outcomes. 

In recent years, mathematical models have become an integral part of the study of biological systems and, what is relevant here, to the dynamics of cancer, see, e.g., \cite{KKL05, SL15,Bu20, SA21, SAES14} and the host of references therein. These are powerful tools for obtaining both a mechanistic understanding of the processes involved in cancer and possible quantitative predictions of treatment outcomes. In this way, mathematical modeling contributes to the understanding of underlying mechanisms and may provide quantitatively validated predictions.

The monograph \cite{SL15} provides a rather comprehensive description of the tools and methods, mainly from the theories of dynamical systems and optimal control, for the analysis of various population dynamics models for cancer and the tumor micro-environment, using different anticancer therapies. The analysis of such models captures well the essence of the underlying biology and sheds light on more general issues, and, in some cases, leads to predictions that are confirmed by experimental studies and clinical data.  An extensive survey of optimization of mathematical models for chemotherapy treatments is presented in \cite {SAES14}, where a considerable list of references, up to $2014$, can be found. It summarizes the various mathematical models applied to the optimal scheduling of cancer chemotherapy. Moreover, it classifies the relevant literature with respect to modeling methods, discusses the limitations of existing research, and provides some directions for further research to optimize chemotherapy treatment scheduling.

The papers \cite{PGR06,P08} develop and analyze mathematical models for cancer growth at the cell-level and its treatments by combination of immune, vaccine, and chemotherapy.  They study the stability of the equilibrium point and provide numerical simulations of mixed chemo-immuno- and vaccine therapies using both mouse and human parameters. The authors describe situations for which neither chemotherapy nor immunotherapy alone are sufficient to control tumor growth, but a combination of therapies eliminates the tumor. In this work, we find that although each therapy is sufficient to eliminate the tumor, combined therapy is more effective because it causes a significant reduction in cost, time, and possible side effects, and also the reduction in the drug amounts, the time interval of chemotherapy, and side effects.

Another model, similar to the one constructed here, can be found in \cite{DPDS20}. There, the conditions for the local stability of all the equilibrium points and the global stability condition for the tumor-free equilibrium point are studied. In addition, some representative simulations are presented. A model for the interactive dynamics between effector-tumor-normal cells can be found in \cite{DD21}. This work is based, in part, on \cite{P08}. In addition, an optimal control problem is formulated to find the best possible ways to administer immunotherapy to eradicate cancer cells without putting the patient at any health-related risk.  Their numerical simulations show that the optimal regimens eradicate the tumor load in less time than the other treatments.

We also mention some of the mathematical models constructed recently to deal with chemotherapy or immunotherapy or a combination of both, \cite{AH09, FKH14, N22,  PSZ16,  R19, S22}, and many references therein.

Oncolytic virotherapy, also known as {\it oncolytic virus therapy,} is an emerging and promising cancer treatment modality that uses genetically modified or naturally occurring viruses to selectively target and kill cancer cells. This novel approach takes advantage of the capacity of viruses to selectively infect and multiply inside living cells. By modifying the viruses to target certain traits of tumors, virotherapy can eradicate tumors while causing the least amount of harm to nearby healthy cells. Anticancer virotherapy is composed of three primary branches: Oncolytic virotherapy,  viral vector–based vaccines, and viral vector-based gene therapy. Three main groups of viruses are : (1) dsDNA viruses (e.g) Adenoviruses, Herpesviruses, Poxviruses), (2) ssDNA viruses (+ strand or “sense”) DNA (e.g. Parvoviruses), (3) dsRNA viruses (e.g. Reoviruses), \cite{WK09}. Our interest here is in the first.  Furthermore, virotherapy can resolve some of the drawbacks of traditional therapies, including non-specific toxicity and drug resistance. 

A mathematical model was created by Pooladvand et al. \cite{pool21} to examine how viral infectivity affects the interactions between oncolytic viruses and tumor cells. The spatial and temporal dynamics of uninfected tumor cells, infected tumor cells, and free viral particles within a spherically symmetric tumor environment are described by the model using a system of partial differential equations (PDEs). The infectivity parameter, which measures the virus's capacity to be internalized by tumor cells that are not infected, is a key element in the model. This parameter takes into account the changes in the tumor microenvironment that impact viral uptake and is derived from experimental data. 

A statistical framework was created by Friedman and Lai \cite{FL18} to assess the effectiveness of oncolytic viruses in conjunction with checkpoint inhibitors, particularly anti-PD-1 antibodies, in cancer treatment. The authors built a system of partial differential equations to simulate how immune cells, tumor cells, and oncolytic viruses interact. According to the study, therapy efficacy is not always improved by raising the checkpoint inhibitor's dosage. Higher doses of the checkpoint inhibitor may paradoxically reduce the efficacy of the combination therapy in specific parameter regimes. This result emphasizes how crucial it is to precisely balance the dosages of the viruses and checkpoint inhibitors to obtain the best possible therapeutic results.

The structure of this work is as follows. Section \ref{sec2} describes our mathematical model, which is in the form of six ordinary differential equations (ODE), modifying the model constructed in \cite{DMSS25}, by adding virotherapy. This makes it necessary to split the cancer cells into infected by the viruses and uninfected, and to add a rate equation for the viruses. The model is somewhat similar to those in \cite{DPDS20,DD21}. We explain the assumptions underlying the model and the various terms in the rate equations. The system describes the dynamics of Healthy, Immune, Infected and Uninfected cancer cell populations, and of the chemo- and virotherapy. Section 3 analyzes the model, showing that the solutions exist and are bounded on every finite time interval, and when the initial conditions are nonnegative, the solutions remain nonnegative. Section 4 is short and deals with the analysis of the system's steady states and their stability. It is found that the Disease Free Equilibrium (DFE) is stable and attracting. However, our interest is in reaching the steady state as fast as possible, and not just waiting until cancer vanishes on its own, which may take a very long time. Section 5 describes an algorithm for the numerical simulations of the model, very similar to the one in \cite{DMSS25}, and then presents a few representative solutions. In particular, it shows that combined chemo-immuno-virotherapy leads to better outcomes than each one separately. The numerical converges of the algorithm is shown in Section \ref{sec:conv}. The paper concludes in Section 6, where we summarize the results and comment on some near-future research directions.

\section{Mathematical Model}
\label{sec2}

We follow the structure in \cite{DMSS25}  and modify the generic model for combined chemotherapy and immunotherapy of a cancer tumor by adding virotherapy. In this process, a chemotherapeutic drug, white platelets, and an infusion of modified viruses are administered to the tumor to eliminate it.  The viruses injected into the tumor affect the cancer cells by infecting and disabling a fraction of them.  The model allows us to study the effects of each treatment on its own and of possible combinations of treatments. {\bf However, the model is generic and we do not have field data to corroborate it. Furthermore, we do not take into account any contraindications on the use of combinations of the treatment modalities.}

The model deals with total numbers, assumed to be large, so that ODEs may be applied. We let the total number of normal or healthy cells be $H(t)$, immune cells $I(t)$,  tumor cells infected by viruses $K_I(t)$, uninfected tumor cells $K_U(t)$, the amount of chemotherapy drug $D(t)$ and the number of viruses $V(t)$, all in the tumor and its immediate surroundings,  at time $t$. Whereas the drug affects all cells, the viruses are very specific and affect only cancer cells. {\bf The populations are measured by the number of individuals, while the drug by arbitrary units.}

We measure time in days, and the time interval of interest is $0\leq t \leq T$, for some $T>0$. 

The proposed model deals with the rate of change of the six variables as follows.

\begin{model}[{\it Model for Chemotherapy, Immunotherapy and Virotherapy of a Cancer Tumor}]
\label{model1} 
Find six functions  $H, I, K_I, K_U, D,V:[0, T]\to \mathbb{R}^{+},$ such that:
\begin{eqnarray}
\frac{dH}{dt}&=&\alpha_1 H(1-\alpha_H H) -\gamma_1 K_UH-\mu_1 DH-\mu_{H}H,
\label{21}\\
\frac{dI}{dt} &=& q+\psi+\alpha_2I(1-\alpha_I I) - \gamma_2 K_UI -\mu_2 DI -\mu_{I}I, \label{22}\\
\frac{dK_I}{dt} &=& \frac{\delta_1 K_U V}{\delta_2 +K_U}  -\delta_3 K_I I -\delta_4 K_I D
-\mu_{K_I}K_I, \label{23}\\
\frac{dK_U}{dt} &=&\alpha_3 K_U(1-\alpha_{K_U} (K_I+K_U))\chi_1(K_U)-\frac{\delta_1K_UV}{\delta_2+K_U} -\gamma_3 K_UI-\gamma_4 K_UH \nonumber \\
&-&\mu_3 K_UD -\mu_{K_U}K_U, \label{24}\\
\frac{dD}{dt} &=&\varphi_D-\beta_{1} ID-\beta_{2}HD-\beta_{3}(K_U+K_I)D-\mu_{D} D, \label{25}\\
\frac{dV}{dt} &=&\varphi_V + \beta_{4}(\delta_3 I +\delta_4 D)K_I -\mu_{V} V.  \label{26}
\end{eqnarray}
Together with the initial conditions,
\begin{equation}
\label{27}
H(0)=H_0,\; I(0)=I_0, \; K_I(0)=K_{I0},  \; K_U(0)=K_{U0},\; D(0)=D_0,\; V(0)=V_0.
\end{equation}
\end{model}
The rates are per day; the populations $H, I, K_I, K_U$ are measured in the number of individuals; while the units of $D, V$ are arbitrary; and the various terms have the correct units, so that $\beta_3K_UD$ or $\varphi_D$ have units of drug/day, while $\varphi_V$ has units of particles/day, and $q$ and $\psi$ have units of cells/day.

Equation (\ref{21}) describes the rate of change of normal cells, which is assumed to grow logistically, with a growth rate $\alpha_1$ and carrying capacity ${1}/{\alpha_H}$. The term $\gamma_1 K_U H$ represents the loss rate due to interaction with uninfected tumor cells;  $\mu_1 DH$ is the death rate caused by the drug;  and $\mu_{H}H$ is the natural death rate of the cells in the absence of the chemo drug.  It is assumed that viruses are very specific and do not affect healthy cells.

Equation (\ref{22}) describes the rate of growth of immune cells in the tumor and its vicinity. The natural supply rate of immune cells from internal growth and from outside or the rest of the body is $q=q(t)$, and $\psi=\psi(t)$ represents the immunotherapy process in which white platelets are added to the tumor, thus increasing the number of immune cells.  The tumor-specific immune response is also given by the logistic term, with growth rate $\alpha_2$ and carrying capacity ${1}/{\alpha_I}$. The term $\gamma_2K_U I$ describes the rate at which immune cells die due to their interaction with tumor cells, and $\mu_2 DI$ is the death rate caused by the drug. Finally,  $\mu_{I} I$ represents the natural mortality rate.

As noted in \cite{DMSS25},   we separated $q$ and $\psi$ because the first is an intrinsic growth and the second represents the medical intervention.

The rate of change in the number of infected tumor cells is described by equation (\ref{23}). The growth rate, caused by infection, is assumed to be the rate at which the uninfected tumor cells become infected, $\frac{\delta_1 K_U V}{\delta_2 +K_U}$, where $\delta_1$ denotes the growth rate constant and $\delta_2$ a weight factor. The second and third terms on the right-hand side, $\gamma_3 K_II$, $\delta_4 K_ID$, represent the loss rate caused by interactions with immune cells and the rate at which the drug kills infected tumor cells, respectively. The last term describes their natural death rate, where the rate constant $\mu_{KI}$ is likely to be very small. 

 Equation (\ref{24}) describes the rate of change in the number of active uninfected tumor cells, the ones that are the most dangerous. The growth rate is assumed to be logistic, $\alpha_3 K_U(1-\alpha_{KU} (K_I+K_U))\chi_1(K_U)$, where $\alpha_3$ denotes the growth rate constant and $\alpha_{KU}$ the reciprocal of the carrying capacity. The introduction of $\chi_1(K_U)$ is for the reasons given below. The second term on the right-hand side, $\frac{\delta_1 K_UV}{\delta_2+K_U}$, describes the rate at which the viruses infect the uninfected cancer cells, with the rate constant $\delta_1$, which is likely to be small, and $\delta_2$ is a weight factor. The terms $\gamma_3 K_U I$, $\gamma_4 K_U H$ and $\mu_3 K_U D$ describe the interactions between the uninfected cancer cells and the immune cells, the healthy cells, and the chemo drug, respectively.  The last term describes their natural death rate, where $\mu_{KU}$ is likely to be very small. 
 
 {\bf We note that  the term $\gamma_1 K_UH$ in (2.1) describes the interaction of the healthy cells with the uninfected cells, competing for physical space and resources in the tumor. Similarly are the terms $\gamma_3 K_UH$ in (2.4), and $\gamma_2 K_UI$ in (2.2). It is very likely that $\gamma_1, \gamma_2, \gamma_3$ are very small, however, the terms were added for the sake of completeness.}

For mathematical reasons, we introduce the {\it cutoff function} $\chi_1$, which is the Heaviside function, defined as
\begin{equation}
\label{chi}
 \chi_1(K_U)=
 \begin{cases}
  0&  \quad 0\leq K_U \leq 1 \\
  1&\quad 1 <K_U.
 \end{cases}
\end{equation}
The addition of this cutoff function guarantees that when $K_U<1$, the cancer vanishes and cannot reappear. Indeed, we have the following remark.

\begin{remark}
The logistic term in (\ref{24})  makes biological sense in describing the growth of cancer cells up to the carrying capacity of the system, but it has a mathematical drawback, since for exceedingly small $K_U<<1$, once drug administration is stopped, the cancer rebounds and $K_U$ grows to its carrying capacity, which defeats the purpose of the model. Therefore, we introduce the cutoff function $\chi_1$, which turns off the growth term once the population of cancer cells is below $1$. 
\end{remark}

Equation (\ref{25}) describes the dynamics of the chemotherapy drug.  The first term, $\varphi_D =\varphi_D(t)$, represents the rate at which the drug is administered. The drug decreases because of the interactions with immune, healthy and cancer cells. The rates are given by $ \beta_{1}ID$, $ \beta_{2}HD$, and $\beta_{3}(K_U+K_I)D$, respectively. Here, we assume that the chemo drug, by killing an infected cell, releases the viruses in it.  This is taken into account with the $\delta_4D$ term in equation \ref{26}. It is likely that $\delta_4$ is very small. The natural decay of the drug in vivo is $\mu_{D} D$.  It is assumed that the chemotherapy drug kills all cell types, but at different rates.

Next, (\ref{26}) is the rate equation for virotherapy and its effects on the tumor. Similarly to \cite{Khan22}, cancer cells are divided into infected and uninfected. The term $\varphi_V =\varphi_V(t)$ is the rate at which virotherapy, in the form of viruses, is administered to the tumor.  The term $\beta_4(\delta_3  I +\delta_4 D)K_I$ describes the generation of new viruses when infected cancer cells burst, either from the immune cells or chemotherapy drug, and die, releasing new viruses. The factor $\beta_{4}$ is the average number of viruses released from each bursting infected cell. The last term describes the decrease in the number of viruses due to other causes.

We note that the amounts of chemotherapy drug and virotherapy are represented by two equations, but we add the immunotherapy into the equation for the immune cells, since it seems more natural to consider them as part of the immune system.

To complete the model, we prescribe in (\ref{27}) the initial cell populations, the initial amount of the drug, and the initial amount of virotherapy. For the sake of biological sense, we assume that
\[
H(0)=H_0\in (0, 1/\alpha_H),\; I(0)=I_0>0,\; K_I(0)=K_{I0}\geq  0, \; K_U(0)=K_{U0}> 0,
\]
\begin{equation}
\label{29}
 D(0)=D_0\geq 0,\quad V(0)=V_0\geq 0.
\end{equation}

 We note that the model is rather complex, with nonlinear interactions and $29$ rate constants, four time-dependent input functions $q, \varphi_D, \varphi_V$ and $\psi$, and six initial conditions.

\section{Existence, positive invariance and boundedness}
\label{sec:3b}
We perform a mathematical analysis and prove that a solution for the model exists, and it is positive and bounded. Since the right-hand sides are locally Lipschitz continuous, the system (\ref{21})--(\ref{26}) has a local solution. Using the initial conditions (\ref{27}) it follows that there exists $0<t_0$ such that the solution exists on the time interval $[0, t_0]$, it is nonnegative and bounded. Let $C^*$ denote the bound of all the variables on $[0, t_0]$.

To establish global solvability, we must show that the model is biologically feasible, meaning that if we start with nonnegative initial conditions, all the solutions remain nonnegative.  We also show that every solution is bounded on every arbitrary finite time interval. This, in turn, implies the global solvability on every finite time interval.

\begin{proposition}
\label{prop3.1}  Assume that the initial conditions satisfy (\ref{27}) and $q,\varphi_D, \varphi_V$ and $\psi$ are Lipschitz, nonnegative and  bounded functions, say $0\leq q, \varphi_D, \varphi_V, \psi \leq M$. Then, every solution of system  (\ref{21})--(\ref{26}) satisfies the estimates:
\begin{eqnarray}
&& 0  <  H \leq      \frac{H_{0}e^{\alpha_1 T}}{1-\alpha_H H_0+H_{0}\alpha_He^{\alpha_1 T}}\leq \frac{1}{\alpha_H},    \label{31}\\
&&  0  <  I  \leq \frac{\alpha_2 S+1}{2\alpha_I S} ,     \label{32}\\
&&  0  \leq   K_I \leq  \frac{A_*}{\mu_{K_I}}+K_{I0},    \label{33}\\
&&  0  \leq   K_U \leq     \frac{K_{U0}e^{\alpha_3 T}}{(1-\alpha_{K_U} K_{U0})+K_{U0}\alpha_{K_U}e^{\alpha_3 T}}\leq \frac{1}{\alpha_{K_U}} ,    \label{34}\\
&& 0  \leq D\leq  \frac{M}{\mu_D}+  \left(D_{0}-\frac{M}{\mu_D}\right)e^{-\mu_DT}   \leq \frac{M}{\mu_D}+D_0,  \label{35}\\
&& 0  \leq V \leq  \frac{C_*}{\mu_V}+  \left(V_{0}-\frac{C_*}{\mu_V}\right)e^{-\mu_VT}   \leq \frac{M}{\mu_V}+V_0,\label{36}
\end{eqnarray}
 for $0<T<\infty $. Here,
 \begin{align}
  S&=\frac{\alpha_2(1-2\alpha_I I_0) +\Delta}{\alpha_2(1-2\alpha_I I_0) -\Delta},
  \label{37}\\
  \Delta^2&=\alpha_2^2 +8\alpha_2\alpha_IM, \label{38}\\
  C_*&=M+\beta_4(\delta_3+\delta_4)(C^{*})^{2}, \label{39}\\
  A_*&=\frac{\delta_1}{\delta_2}(C^{*})^{2}. \label{310}
 \end{align}
\end{proposition}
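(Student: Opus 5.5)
The plan is to handle positivity first, then derive each upper bound by comparing the corresponding equation with a scalar ODE (logistic, Riccati, or linear) whose solution is explicit. We work on a local existence interval $[0,t_0]$, on which all variables are bounded by $C^*$. At the end we continue the solution to arbitrary $T$ using the a priori bounds.

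\textbf{Positivity.} Equations (\ref{21}), (\ref{23}), (\ref{24}) and (\ref{26}) have the structure $\dot X = F_X + X\,G_X$ with $F_X\ge 0$ whenever the other variables are nonnegative. Explicitly:
\begin{itemize}
\item For $H$ we have $F=0$. Hence
\[
H(t)=H_0\exp\Bigl(\int_0^t(\alpha_1(1-\alpha_H H)-\gamma_1K_U-\mu_1D-\mu_H)\,ds\Bigr)>0,
\]
and the same exponential representation gives $K_U\ge0$.
\item For $K_I$, $F=\delta_1K_UV/(\delta_2+K_U)\ge0$, so variation of constants gives $K_I\ge0$.
\item For $V$, $F=\varphi_V+\beta_4(\delta_3I+\delta_4D)K_I\ge0$, so $V\ge0$.
\item For $I$ and $D$, the inhomogeneous terms $q+\psi\ge0$ and $\varphi_D\ge0$ give $I>0$ (since $I_0>0$) and $D\ge0$.
\end{itemize}
These statements are coupled, so I would make the argument rigorous with a first-exit-time argument. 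Let $t_1$ be the first time some component vanishes or becomes negative, and apply the representations above on $[0,t_1]$ to reach a contradiction.

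\textbf{Upper bounds.} With nonnegativity in hand, drop the negative terms.
\begin{itemize}
\item For $H$: $\dot H\le\alpha_1H(1-\alpha_HH)$. Comparison with the explicit logistic solution gives (\ref{31}), and the bound $\le 1/\alpha_H$ holds because $H_0<1/\alpha_H$.
\item For $K_U$: $\dot K_U\le \alpha_3K_U(1-\alpha_{K_U}K_U)$, using $K_I\ge0$, $\chi_1\le1$, and the fact that when $\chi_1=0$ the derivative is $\le0$. This gives (\ref{34}) in the same way.
\item For $I$: $\dot I\le M+\alpha_2I(1-\alpha_II)$. This is a Riccati inequality. Its right-hand side has roots $(\alpha_2\pm\Delta)/(2\alpha_2\alpha_I)$ with $\Delta$ as in (\ref{38}). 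Solving the comparison equation explicitly via partial fractions produces the quantity $S$ in (\ref{37}), and bounding the resulting expression yields (\ref{32}).
\item For $D$: $\dot D\le M-\mu_DD$, which is linear and gives (\ref{35}).
\item For $K_I$: $\dot K_I\le \frac{\delta_1}{\delta_2}K_UV-\mu_{K_I}K_I\le A_*-\mu_{K_I}K_I$, using the local bound $C^*$ on $K_U$ and $V$. This gives (\ref{33}).
\item For $V$: $\dot V\le C_*-\mu_VV$, using the local bound $C^*$ on $I$, $D$ and $K_I$. This gives (\ref{36}).
\end{itemize}

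\textbf{Main obstacles.} The main difficulty is twofold.

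First, the Riccati step for $I$ needs care. Matching the explicit solution to the precise form (\ref{37}) requires tracking signs; note that $S$ involves $1-2\alpha_II_0$. I would first solve $\dot y=M+\alpha_2y-\alpha_2\alpha_Iy^2$ exactly and then simply bound $y$. If the stated constant turns out to be awkward, I would fall back on the cleaner bound $I\le\max\{I_0,(\alpha_2+\Delta)/(2\alpha_2\alpha_I)\}$.

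Second, the bounds for $K_I$ and $V$ rely on $C^*$, which is only a local bound. To globalize, I would close the estimate without circularity. The bounds on $H$, $I$, $K_U$ and $D$ are uniform, so only $K_I$ and $V$ remain coupled. Set $W=K_I+V/(\beta_4+1)$, or more simply consider $\dot K_I\le c_1V$ and $\dot V\le M+c_2K_I$ with $c_1,c_2$ built from the uniform bounds. Gronwall's inequality then gives exponential bounds on every finite interval, which rules out blow-up. The solution therefore extends to $[0,T]$, and on that interval $C^*$ may be taken as these global bounds, yielding (\ref{33}) and (\ref{36}).
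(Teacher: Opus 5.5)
Your strategy is the paper's own: drop the negative terms, compare with explicit logistic, Riccati and linear scalar problems, and get positivity from exponential lower bounds. The step for $I$, however, cannot be carried out as you plan.

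\textbf{The bound for $I$.} First a slip: $q+\psi\le 2M$, not $M$. It is $2M$ that produces $\Delta^2=\alpha_2^2+8\alpha_2\alpha_I M$ in (\ref{38}), so your roots $r_\pm=(\alpha_2\pm\Delta)/(2\alpha_2\alpha_I)$ are right only after this correction. The real problem is that ``bounding the resulting expression yields (\ref{32})'' is impossible, because (\ref{32}) is false. The comparison solution is
\[
y(t)=\frac{r_+-S\,r_-\,e^{-\Delta t}}{1-S\,e^{-\Delta t}},\qquad S=\frac{I_0-r_+}{I_0-r_-},
\]
and this $S$ is exactly (\ref{37}). Since $y$ moves monotonically from $I_0$ to $r_+$, the sharp conclusion is $I\le\max\{I_0,r_+\}$, which is your fallback. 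By contrast, $(\alpha_2S+1)/(2\alpha_I S)$ has the following defects:
\begin{itemize}
\item it is not dimensionally consistent;
\item it is undefined when $I_0=r_+$, since then $S=0$;
\item it fails already at $t=0$, hence for small $t>0$. For $I_0=1/(2\alpha_I)$ one has $S=-1$, and the bound equals $(\alpha_2-1)/(2\alpha_I)<I_0$ whenever $\alpha_2<2$. With the Table 1 values and the paper's $I_0=250$, it gives $137.5<250$.
\end{itemize}
The paper reaches (\ref{32}) only through an algebra error in its ``tedious manipulations''; its displayed closed form does not in general reduce to $I_0$ at $t=0$. So you should state $\max\{I_0,r_+\}$ as a corrected (\ref{32}) rather than try to match the printed constant.

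\textbf{Two other final inequalities.} The last inequality in (\ref{36}) does not follow. Since $C_*\ge M$, you only get $V\le C_*/\mu_V+V_0$, which is what $\Omega$ actually uses. The last inequality in (\ref{34}) needs $K_{U0}\le 1/\alpha_{K_U}$, which (\ref{29}) does not assume, so your ``in the same way'' hides an extra hypothesis.

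\textbf{Where you improve on the paper.} The paper uses $C^*$, a bound valid only on the local interval $[0,t_0]$, in its upper bounds for $K_I$ and $V$ and in its lower bounds for $H$, $I$, $K_I$. As written, these are therefore not a priori estimates on $[0,T]$. Your Gronwall closure supplies exactly what is missing: $H,I,K_U,D$ are bounded uniformly, after which $\dot K_I\le c_1V$ and $\dot V\le M+c_2K_I$ close. Your exponential representations also give positivity without any $C^*$.

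\textbf{One repair in the positivity argument.} Your first-exit-time argument degenerates when $K_{I0}$, $D_0$ or $V_0$ is zero. Then $t_1=0$, and nonnegativity up to $t_1$ does not by itself propagate past $t_1$. You can fix this in either of two ways:
\begin{itemize}
\item perturb the data to be positive and pass to the limit, as the paper does (this needs continuous dependence, which is delicate here because $\chi_1$ makes (\ref{24}) discontinuous at $K_U=1$);
\item more simply, apply Gronwall to $(K_I^-)^2+(D^-)^2+(V^-)^2$ from $t=0$. This works because $H,I,K_U>0$ unconditionally, so the negative parts of the cross terms are bounded by $\delta_1V^-$ and $C(K_I^-+D^-)$, respectively.
\end{itemize}
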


We conclude that the {\it feasible region} $\Omega$ is
\[
\Omega=\left\{ (H, I, K_I, K_U, D, V)\in \mathbb{ R}_+^6 : 0\leq   H\leq \frac{1}{\alpha_H},\quad I(t)\leq \frac{\alpha_2 S+1}{2\alpha_I S}, \right.
\]
\[
 \left. \quad K_I\leq \frac{A_*}{\mu_{K_I}}+K_{I0},
 \quad K_U\leq \frac{1}{\alpha_{KU}},\quad D\leq \frac{M}{\mu_d}+  D_{0},\quad V(t)\leq \frac{C_*}{\mu_V} +V_0 \right\},
\]
which is positively invariant for the system (\ref{21})-(\ref{26}).

\noindent
\begin{proof}
 First, since the initial conditions are positive (except for $K_I$, $K_U$, and $V$, which we may choose as small and positive, and then pass to the limit) by the continuity of the solutions, there is a time interval $[0, t_0]$, $t_0>0$, on which the solutions are nonnegative and bounded by some positive constant $C^*$.

 It follows from (\ref{21}) that
\[
\frac{dH}{dt}\leq \alpha_{1}H(1-\alpha_H H).
\]
Let $\tilde H$ be the solution of the logistic equation 
\[
\frac{d \tilde H}{dt}= \alpha_{1}\tilde H(1-\alpha_H \tilde H),\qquad \tilde H(0)=H_0,
\]
then $H\leq \tilde H$. It is known, see, e.g, \cite{BoDi97}, or by using symbolic computations, that
 \[
 \tilde H=\frac{H_{0}}{(1-\alpha_H H_0)e^{-\alpha_1 T}+H_{0}\alpha_H}
 \leq \frac1{\alpha_H}.
 \]

Next, since $H_{0}>0$, the model is continuous and the solutions are locally bounded, (\ref{21}) indicates that
\[
H'>-(\alpha_1\alpha_H C^{*} +\gamma_1C^* +\mu_1 C^* +\mu_H)H\geq -c_1H,
\]
where $c_1>0$, and this implies that
\[
H(t)\geq H_0e^{-c_1 t}>0.
\]
 This establishes the bounds in (\ref{31}).  Estimate (\ref{34}) is derived similarly,
 when we note that
 \[
 \alpha_3 K_U(1-\alpha_{K_U} (K_I+K_U))\chi_1(K_U)\leq \alpha_3 K_U(1-\alpha_{K_U} K_U).
 \]
 The cutoff function $\chi_1$ does not change the estimate.

To obtain (\ref{32}) we proceed in the same way. Equation (\ref{22}) implies that
\[
I'\leq q + \psi +\alpha_2 I(1-\alpha_I I)\leq 2M+\alpha_2 I(1-\alpha_I I),
\]
hence, 
\[
\frac{dI}{2M+\alpha_2 I(1-\alpha_I I)}\leq dt.
\]
Integrating and using (\ref{37}) and (\ref{38}), see e.g. \cite[Formula 3.3.17]{AS70}, yields
\[
\frac1\Delta \log\left(\frac{\alpha_2(1-2\alpha_I I) -\Delta}{\alpha_2(1-2\alpha_I I) +\Delta}\right)\leq t+C_2,
\]
where $C_2$ is an integration constant, determined by the initial condition $I(0)=I_0$.
Now, tedious and long, but straightforward manipulations lead to
\[
I(t)\leq \frac{\Delta -\alpha_2+e^{\Delta t}( S\alpha_2 +1)}{(2\alpha_I)(-1+Se^{\Delta t})}\leq \frac{\alpha_2 S+1}{2\alpha_I S}.
\]
To obtain the lower bound, we note that
\[
I' \geq  -(\alpha_2\alpha_1C^*+\gamma_2C^*  + \mu_2 C^* +\mu_I )I\geq -c_{2}I,
\]
thus,
\[
I(t) \geq I_0 e^{-c_{2}t}>0.
\]
These inequalities establish (\ref{32}).

It follows from (\ref{23}), since $V\leq C^*$ and using (\ref{310}), that 
\[
\frac{d K_I}{dt}\leq \frac{\delta_1 K_UC^*}{\delta_2 + K_U}-\mu_{K_I}K_I\leq A_*-\mu_{K_I}K_I,
\]
therefore,
\[
\frac{d K_I}{A_*-\mu_{K_I}K_I}\leq dt.
\]
Then, simple manipulations using the fact that $K_I(0)=K_{I0}$, yield
\[
K_I(t)\leq \frac{A_*}{\mu_{K_I}} +K_{I0}e^{-\mu_{K_I} t}\leq \frac{A_*}{\mu_{K_I}} +K_{I0}.
\]
Furthermore,
\[
K_I'\geq -(\delta_3C^*+\delta_4C^*+\mu_{K_I})K_I\geq -c_4K_I,
\]
and since this implies
\[
K_I>K_{I0}e^{-c_4T}> 0,
\]
we have (\ref{33}).

Equation (\ref{25}) satisfies 
\[
\frac{dD}{dt}\leq M-\mu_D D.
\]
It follows from \cite{BoDi97}, or using software, that
\[
D(t)\leq \frac{M}{\mu_D} +\left(D_0 -\frac{M}{\mu_D} \right) e^{-\mu_D t}.
\]
The methods used for the previous estimates show that $D(t)\geq 0$. This establishes (\ref{35}).
Equation (\ref{26}) shows that 
\[
\frac{dV}{dt}\leq M + \beta_4(\delta_3+\delta_4)(C^*)^2 -\mu_V V=C_*-\mu_V V. 
\]
It follows from e.g. \cite{BoDi97}, or using software, that
\[
V(t)\leq \frac{C_*}{\mu_V }+\left(V_0- \frac{C_*}{\mu_V }\right)e^{-\mu_V t}.
\]

For the lower bound of $V$, we obtain
\[
V'\geq -\mu_V V.
\]
Similar to the previous equations, it follows that $V\geq 0$.
This completes the proof.
\end{proof}

This result shows that, in addition, the system is Lipschitz continuous on every arbitrary time interval. Therefore, we have the following global existence result.
\begin{theorem}[Global Existence]
 Assume that the initial conditions satisfy (\ref{27}) and $q,\psi,\varphi_D $ and $\varphi_V$ are smooth, positive, and bounded functions. Then, system  (\ref{22})--({2.5}) has a unique smooth solution on every  time interval $[0, T]$, $0<T\leq \infty$.   
\end{theorem}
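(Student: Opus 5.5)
The plan is the standard one: local existence and uniqueness, a priori bounds on every finite interval, and a continuation argument. There is one complication the right-hand side forces on us. Because of the Heaviside cutoff $\chi_1$ in (\ref{24}), the vector field is locally Lipschitz, and in fact smooth, only away from the hyperplane $K_U=1$. I would therefore build the solution piecewise. On the open set $\{K_U>1\}$ we have $\chi_1\equiv 1$, and on $\{K_U<1\}$ we have $\chi_1\equiv 0$. In both regions the right-hand sides of (\ref{21})--(\ref{26}) are polynomial or rational in the unknowns, with denominator $\delta_2+K_U>0$ for $K_U\ge 0$. The coefficients $q,\psi,\varphi_D,\varphi_V$ are smooth, so the Picard--Lindel\"of theorem gives a unique smooth local solution from any initial state, and the regularity inherits from the data.

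The key observation about the switch is that it can happen at most once, and only downward. If $K_{U0}>1$, solve the system with $\chi_1=1$ up to the first time $t_1$, if any, at which $K_U(t_1)=1$. From then on, and also from $t=0$ if $K_{U0}\le 1$, use the system with $\chi_1=0$. In that system every term of (\ref{24}) is nonpositive when the variables are nonnegative, so $K_U'\le 0$ and $K_U$ never returns above $1$. Hence the concatenated function is the unique solution: unique on $[0,t_1]$ and unique on $[t_1,T]$ with the data at $t_1$. It is smooth except possibly at the single switching time $t_1$, where it is still $C^1$ in $K_U$ up to the jump of the growth term. I would remark that ``smooth'' in the statement has to be read in this piecewise sense.

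For continuation, I would use Proposition \ref{prop3.1} for nonnegativity and for the bounds on $H$, $I$, $K_U$ and $D$. Those four bounds depend only on the data and $M$. The bounds (\ref{33}) and (\ref{36}) for $K_I$ and $V$, however, involve the local bound $C^*$, which makes the argument circular if used for continuation. \textbf{This is the main obstacle}, and I would remove the circularity directly. Since $K_U/(\delta_2+K_U)\le 1$, equation (\ref{23}) gives
\[
K_I'\le \delta_1 V-\mu_{K_I}K_I .
\]
With $\bar I$ and $\bar D$ the $C^*$-free bounds from (\ref{32}) and (\ref{35}), equation (\ref{26}) gives
\[
V'\le M+\beta_4(\delta_3\bar I+\delta_4\bar D)K_I-\mu_V V .
\]
Adding these shows that $Y=K_I+V$ satisfies $Y'\le M+LY$ with $L=\max\{\delta_1,\beta_4(\delta_3\bar I+\delta_4\bar D)\}$. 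Gronwall's inequality then yields $Y(t)\le (Y(0)+M/L)e^{Lt}$, which is finite on every bounded interval.

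Thus, on any maximal interval of existence $[0,t^*)$ with $t^*\le T<\infty$, all six components stay in a compact subset of $\mathbb{R}_+^6$. The standard blow-up alternative for ODEs, applied on each of the at most two pieces, shows that $t^*$ cannot be finite, so the solution extends to $[0,T]$. Since $T$ is arbitrary, and the solutions on nested intervals agree by uniqueness, they patch together into a unique solution on $[0,\infty)$, which covers the case $T=\infty$.
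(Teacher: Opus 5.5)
Your existence argument has the same skeleton as the paper's, whose entire proof is the sentence before the theorem: local solvability, the bounds of Proposition~\ref{prop3.1}, then continuation. You improve on it at two points the paper gets wrong. First, because of $\chi_1$ the right-hand side is not Lipschitz across $K_U=1$. Second, the bounds (\ref{33}) and (\ref{36}) contain the local constant $C^*$, so they cannot drive a continuation argument. Your one-way switching construction and the Gronwall estimate for $K_I+V$ handle both points correctly.

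One small caution: do not take $\bar I$ from the closed form (\ref{32}). With the paper's values $\alpha_2=1.55$, $\alpha_I=0.002$, $I_0=250$ it gives $S=-1$ for every $M$, and the ``bound'' $I\le 137.5<I_0$. Use instead $\bar I=\max\{I_0,(\alpha_2+\Delta)/(2\alpha_2\alpha_I)\}$, which follows directly from $I'\le 2M+\alpha_2 I(1-\alpha_I I)$.

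The genuine gap is the uniqueness claim. Your monotonicity argument shows that the solution of the $\chi_1=0$ system stays in $\{K_U\le 1\}$. It does not show that every solution of the discontinuous system does. So ``unique on $[t_1,T]$ with the data at $t_1$'' is only uniqueness for that subsystem, and a competitor could leave $\{K_U\le 1\}$ by following the $\chi_1=1$ field.

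Under the usual Carath\'eodory solution concept this really happens. Take $K_{U0}=1$, which (\ref{29}) allows, with $K_{I0}=D_0=V_0=0$ and $\alpha_3(1-\alpha_{K_U})>\gamma_3I_0+\gamma_4H_0+\mu_{K_U}$. The solution of the $\chi_1=1$ system then has $K_U>1$ on some interval $(0,\varepsilon)$. Hence it satisfies (\ref{21})--(\ref{26}) for every $t\in(0,\varepsilon)$, yet it differs from your decreasing solution. A tangential arrival at $K_U=1$ from above can produce the same branching.

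Classical $C^1$ solutions do not help either: they fail to exist past a crossing. There $K_U'$ jumps by $\alpha_3(1-\alpha_{K_U}(K_I(t_1)+1))$, which is generically nonzero, so the solution is only Lipschitz and piecewise smooth at $t_1$, not ``still $C^1$''.

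You therefore have to fix the solution concept explicitly. Define a solution as a continuous function whose right derivative exists and equals the right-hand side at every $t$. Your concatenation qualifies because $\chi_1(1)=0$, and it is also what the explicit scheme of Section~\ref{sec:num} computes. With this notion, at any instant where $K_U=1$ the right derivative of $K_U$ equals $-\bigl(\delta_1V/(\delta_2+1)+\gamma_3I+\gamma_4H+\mu_3D+\mu_{K_U}\bigr)<0$. So every solution enters $\{K_U<1\}$ at once and, by your monotonicity argument, never leaves it. Picard--Lindel\"of in the open sets $\{K_U>1\}$ and $\{K_U<1\}$ then yields uniqueness. Without such a convention, the theorem's ``unique smooth solution'' is false as stated.
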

 
\section{Equilibrium points and their stability}
\label{sec4}
 Equilibrium points are found by equating the derivatives to zero. However, in our case, the equilibrium points are only of mathematical interest, since the purpose of the model is to find ways to drive the system as quickly as possible to the DFE, the state without cancer.
 
 Let $(\bar{H},\bar{I}, \bar{K_I},  \bar{K_U}, \bar{D}, \bar{V}) $ denote the equilibrium values. The equilibrium system  (omitting the bars) is the following:
\begin{eqnarray}
0&=&\alpha_1 H(1-\alpha_H H) -\gamma_1 K_UH-\mu_1 DH-\mu_{H}H,\notag \\
0 &=& q+\psi+\alpha_2I(1-\alpha_I I) - \gamma_2 K_UI -\mu_2 DI -\mu_{I}I,\notag  \\
0 &=& \frac{\delta_1 K_U V}{\delta_2 +K_U}  -\delta_3 K_I I -\delta_4 K_I D
-\mu_{K_I}K_I, \notag \\
0  &=&\alpha_3 K_U(1-\alpha_{K_U} (K_I+K_U))\chi_1(K_U)-\frac{\delta_1K_UV}{\delta_2+K_U} -\gamma_3 K_UI-\gamma_4 K_UH  \nonumber \\
&-&\mu_3 K_UD -\mu_{K_U}K_U, \label{41}  \\
0 &=&\varphi_D-\beta_{1} ID-\beta_{2}HD-\beta_{3}(K_U+K_I)D-\mu_{D} D, \nonumber\\
0 &=&\varphi_V + \beta_{4}(\delta_3 I +\delta_4 D)K_I -\mu_{V} V.  \nonumber
\end{eqnarray}

For biological reasons, we assume that
the growth rate constants are larger than the death rate constants, 
\begin{equation}
\label{42}
\alpha_1>\mu_H,\qquad  \alpha_2>\mu_I.   
\end{equation}

To study the stability of the equilibrium points, we compute the
Jacobian matrix \( J \) of the system, given by:
\[
J(H, I, K_I, K_U, D, V)  =
\]
\begin{equation}
\label{J}
\begin{bmatrix}
J_{11} & 0 & 0& -\gamma_1 H & -\mu_1 H &0 \\
0 &J_{22}& 0&-\gamma_2 I & -\mu_2 I&0 \\
0& -\delta_3K_I&J_{33}& \frac{\delta_1\delta_2 V}{(\delta_2+K_U)^2}&-\delta_4K_I&\frac{\delta_1K_U}{\delta_2+K_U}\\
-\gamma_4K_U& -\gamma_3 K_U& -\alpha_3\alpha_{K_U}K_U\chi_1 & J_{44}& -\mu_3 K_U & -\frac{\delta_1 K_U}{\delta_2+K_U} \\
-\beta_2 D & -\beta_1 D & -\beta_3 D & -\beta_3 D&J_{55}& 0\\
0&\beta_4\delta_3K_I &\beta_4\delta_3I+\beta_4\delta_4D& 0&\beta_4\delta_4K_I& -\mu_V
\end{bmatrix},
\end{equation}

where the terms $ J_{11}-J_{55}$ are given by:
\begin{eqnarray*}
J_{11} &=& \alpha_1 (1 - 2\alpha_H H) - \gamma_1 K_U - \mu_1 D - \mu_H,\\
J_{22} &=& \alpha_2 (1 - 2\alpha_I I) - \gamma_2 K_U - \mu_2 D - \mu_I,\\
J_{33} &=&     - \delta_3 I - \delta_4 D-\mu_{KI} ,\\
J_{44} &=& (\alpha_3 -\alpha_3\alpha_{KU}K_I-2\alpha_3\alpha_{KU}K_U)\chi_1
-\frac{\delta_1\delta_2V}{(\delta_2+K_I)^2} -\gamma_3I-\gamma_4H-\mu_3D-\mu_{KU},\\
J_{55}&=& -\beta_2H-\beta_1I-\beta_3(K_I+K_U)-\mu_D.
\end{eqnarray*}

\subsection{Disease free equilibrium}
\label{sec:dfe}

First, we study the DFE and observe that in such a case we need to have $K_I=K_U=0$ and $\psi=\varphi_D=\varphi_V=0$, hence $D=V=0$, since neither drug nor virotherapy is needed. Moreover, $q=const. >0$. Thus, the system reduces to
\[
\alpha_1 H(1-\alpha_H H)-\mu_{H}H=0,
\] 
\[
q+\alpha_2 I(1-\alpha_I I) -\mu_{I}I=0
\]
The first equation has two solutions $H_1=0, H_2=(\alpha_1-\mu_H)/\alpha_1\alpha_H$. The second equation also has two solutions:
\[
I_{1,2}=\frac{(\alpha_2-\mu_I)\pm \sqrt{(\alpha_2-\mu_I)^2+4\alpha_2\alpha_Iq}}{2\alpha_2\alpha_I}.
\]

We note that the solution $I_1$ is positive, since $\alpha_2>\mu_I$, while $I_2$ is negative, which we discard as a mathematical artifact. Indeed, since the solutions are nonnegative, Proposition 3.1, the negative solution cannot be reached as a long-time limit, starting with nonnegative initial conditions.

Therefore, we find four DFE solutions $(H_{1,2},I_{1,2})$. However, the two solutions with $H_1=0$ and those with $I_2$ are of no interest. Hence, we consider the only feasible solution

\begin{equation}
\label{44}
\left(\frac{\alpha_1-\mu_H}{\alpha_1\alpha_H}, I_1\right).    
\end{equation}

In the numerical simulations, in the case of convergence to a DFE, we have
\[
(H, I)= (1000,500). 
\]

Then, in the DFE, we have
\begin{equation}
\label{45}
J(H, I, 0, 0, 0, 0)  =
\begin{bmatrix}
J_{11} & 0 & 0& -\gamma_1 H & -\mu_1 H &0 \\
0 &J_{22}& 0&-\gamma_2 I & -\mu_2 I&0 \\
0& 0&J_{33}&0&0&0\\
0& 0& 0 & J_{44}& 0 &0 \\
0 & 0 &0 & 0&J_{55}& 0\\
0&0&J_{63}& 0&0&-\mu_V
\end{bmatrix},
\end{equation}
Here, we used the fact that $\chi_1(0)=0$, and 
\[
J_{11} = (\alpha_1 - \mu_H)  - 2\alpha_1\alpha_H H=-(\alpha_1 - \mu_H)<0,
\]
\[
J_{22} = (\alpha_2  - \mu_I) - 2\alpha_2\alpha_I I=
-\sqrt{(\alpha_2-\mu_I)^2+4\alpha_2\alpha_Iq}<0,
\]
\[
J_{33} =  - \delta_3 I -\mu_{KI} =-\delta_3\left(\frac{(\alpha_2-\mu_I)+ \sqrt{(\alpha_2-\mu_I)^2+4\alpha_2\alpha_Iq}}{2\alpha_2\alpha_I}\right)-\mu_{KI}<0,
\]
since $\chi_1(0)=0$, we have

\[
J_{44} = -\gamma_3 I - \gamma_4 H -\mu_{KU}=-\gamma_3\left(\frac{(\alpha_2-\mu_I)+ \sqrt{(\alpha_2-\mu_I)^2+4\alpha_2\alpha_Iq}}{2\alpha_2\alpha_I}\right)
\]
\[
-\gamma_4\frac{(\alpha_1-\mu_H)}{\alpha_1\alpha_H}-\mu_{KU} 
<0,
\]
\[
J_{55}=-\beta_2H-\beta_1 I-\mu_D=-\beta_1\left(\frac{(\alpha_2-\mu_I)+ \sqrt{(\alpha_2-\mu_I)^2+4\alpha_2\alpha_Iq}}{2\alpha_2\alpha_I}\right)
\]
\[
-\beta_2\frac{(\alpha_1-\mu_H)}{\alpha_1\alpha_H}-\mu_{D}  <0,
\]
\[
J_{63}=\beta_4\delta_3I=\beta_4\delta_3\left(\frac{(\alpha_2-\mu_I)+ \sqrt{(\alpha_2-\mu_I)^2+4\alpha_2\alpha_Iq}}{2\alpha_2\alpha_I}\right)>0.
\]

It is seen that all the terms on the main diagonal are negative, and so are all the terms above the main diagonal. Using Mathematica, we find that the eigenvalues are
\begin{equation}
\label{46}
\lambda_1=J_{11},\;\; \lambda_2=J_{22},\;\;\lambda_3=J_{33},\;\; \lambda_4=J_{44},\;\;
\lambda_5=J_{55},\;\;\lambda_6=-\mu_V.
\end{equation}
Since all the eigenvalues are negative, we conclude,
\begin{proposition}
  The Disease Free Equilibrium in model (\ref{41}) is stable and attracting.  
\end{proposition}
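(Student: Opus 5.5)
The plan is to prove local asymptotic stability of the DFE by linearization. At the DFE point $E_0=(\bar H,\bar I,0,0,0,0)$ with $\bar H=(\alpha_1-\mu_H)/(\alpha_1\alpha_H)$ and $\bar I=I_1$, and with $\psi=\varphi_D=\varphi_V=0$ and $q$ constant, the system is autonomous. The point $E_0$ lies in the region $K_U\le 1$ where $\chi_1\equiv 0$. This region is not a neighbourhood of $E_0$ in $\mathbb{R}^6$, but under the nonnegativity of Proposition \ref{prop3.1} every nearby solution has $0\le K_U<1$. There the vector field is smooth, since the Heaviside switch is inactive, so the vector field restricted to $\{K_U<1\}$ is $C^1$ near $E_0$. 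Its Jacobian is exactly (\ref{45}). This is the first step: justify using the linearization despite the discontinuous cutoff by working on the forward-invariant set $\{0\le K_U<1\}$ intersected with the feasible region $\Omega$.

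Next I will compute the spectrum of (\ref{45}) by hand rather than quoting Mathematica. Order the variables as $(K_I,K_U,D,V,H,I)$. Rows $3,4,5$ of (\ref{45}) have only a diagonal entry, and row $6$ depends only on $K_I$ and $V$. Hence the characteristic polynomial factors as
\[
(J_{33}-\lambda)(J_{44}-\lambda)(J_{55}-\lambda)(-\mu_V-\lambda)\,\det\begin{bmatrix} J_{11}-\lambda & 0\\ 0 & J_{22}-\lambda\end{bmatrix}.
\]
This follows by expanding successively along rows $3,4,5$ and then column $6$, which has the single nonzero entry $-\mu_V$. So the eigenvalues are exactly (\ref{46}). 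I will then check their signs using (\ref{42}) and the explicit formulas for $\bar H,\bar I$: $J_{11}=-(\alpha_1-\mu_H)<0$, $J_{22}=-\sqrt{(\alpha_2-\mu_I)^2+4\alpha_2\alpha_I q}<0$, and $J_{33},J_{44},J_{55},-\mu_V<0$ because $\bar H,\bar I>0$ and the decay constants are positive.

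Since all eigenvalues are real and strictly negative, the Hartman–Grobman theorem, or more simply Lyapunov's indirect method, gives local exponential asymptotic stability. This means $E_0$ is stable and attracts all solutions starting in a small neighbourhood within $\Omega$. I would state explicitly that ``attracting'' is meant locally.

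The main obstacle is the cutoff $\chi_1$. If a perturbation had $K_U>1$, the growth term $\alpha_3 K_U$ would appear, and $J_{44}$ would acquire $+\alpha_3$, possibly making it positive. I handle this by restricting perturbations to $K_U(0)<1$. Then $K_U'\le -(\gamma_3 I+\gamma_4 H+\mu_3 D+\mu_{K_U})K_U$, so $K_U$ is nonincreasing and stays below $1$, and the cutoff remains off along the whole trajectory. This makes the smooth linearization argument rigorous on the relevant invariant set.
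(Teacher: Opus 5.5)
Your proposal is correct and follows the paper's own route. You linearize at the DFE, observe that the eigenvalues of the Jacobian (\ref{45}) are its diagonal entries $J_{11},\dots,J_{55},-\mu_V$, check via (\ref{42}) and the explicit $\bar H,\bar I$ that they are all negative, and conclude local asymptotic stability. Your version is tidier than the paper's in two ways: you get the spectrum from the lower-triangular structure in the ordering $(K_I,K_U,D,V,H,I)$ rather than quoting Mathematica, and you justify linearizing despite the discontinuous cutoff $\chi_1$ by noting that $K_U$ is nonincreasing while $K_U\le 1$, so nearby nonnegative trajectories never switch on the growth term.
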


However, as was noted above, our interest is in reaching the disease-free state as fast as possible, and not just waiting until cancer vanishes on its own, which may take a very long time.

\subsection{Endemic equilibrium}
\label{sec:ee}

To obtain the EE states, we need to solve the full system (\ref{41}), with $\chi_1=1$.
However, our main interest in this work is in the eradication of cancer cells. Hence, the EE are only of mathematical interest.  For the sake of completeness, we describe the stability of the EE
in the Appendix.

\section{Numerical algorithm and simulations}
\label{sec:num}
To obtain insight into the model's solutions and some quantitative results, we compute approximate solutions of the model. To that end, we constructed an explicit time-stepping algorithm and wrote the related program in MATLAB.  We present the model solutions when only chemotherapy is administered, only immunotherapy is used, only virotherapy is used, and when all three are applied.

The time interval $[0, T]$ is discretized into $N$ subintervals of length $\Delta t=T/N$. Typically, in the simulations $\Delta t=3.6698\text{ x }10^{-5}$ is used. Denoting the value of a function $f$ at time $t_n=n\Delta t$ by $f^n$, the discretized system, for $n=0,1,\dots N-1$, is as follows.

\begin{eqnarray}
H^{n+1}&=&H^n +\left(\alpha_1 H^n(1-\alpha_H H^n) -\gamma_1 K_U^nH^n-\mu_1 D^nH^n-\mu_{H}H^n\right)\Delta t ,
\label{51}\\
I^{n+1} &=&I^n+\left( q^n+\psi^n+\alpha_2 I^n(1-\alpha_I I^n) - \gamma_2 K_U^nI^n -\mu_2 D^nI^n -\mu_{I}I^n\right)\Delta t,
\label{52}\\
K_I^{n+1} &=&K_I^n+\left(\dfrac{\delta_1K_U^nV^n}{\delta_2+K_U^n}-\delta_3K_I^nI^n-\delta_4K_I^nD^n-\mu_{KI}K_I^n\right)\Delta t,
\label{53}\\
K_U^{n+1} &=&K_U^n+(\alpha_3 K_U^n(1-\alpha_{KU} (K_I^n+K_U^n))\chi_1(K_U^n)-\dfrac{\delta_1K_U^nV^n}{\delta_2+K_U^n}-\gamma_3  K_U^n I^n  
\nonumber\\
&& - \gamma_4 K_U^nH^n - \mu_3 K_U^nD^n-\mu_{KU}K_U^n)\Delta t,
\label{54}\\
D^{n+1} &=&D^n+\left(\varphi_D^n-\beta_{1} I^nD^n-\beta_{2}H^nD^n-\beta_{3}(K_U^n+K_I^n)D^n-\mu_{D} D^n\right)\Delta t, 
\label{55}\\
V^{n+1} &=&V^n+(\varphi_V^n+\beta_4(\delta_3I^n+\delta_4D^n)K_I^n-\mu_VV^n)\Delta t.
\label{56}
\end{eqnarray}
Together with the initial conditions,
\begin{equation}
\label{57}
H^0=H_0,\; I^0=I_0, \; K_{I}^0=K_{I0},\; K_U^0=K_{U0}, \; D^0=D_0,\; V^0=V_0.
\end{equation}

\newpage
Using the above discretization of the model equations, we have the following algorithm.\\
\rule{15.3cm}{.03cm}
\begin{algorithm} Explicit time-stepping\\
\rule{15.3cm}{.01cm}\\
 \hskip2cm Set $H^0, I^0, K_I^0, K_U^0, D^0$ and $V^0$ using (\ref{57}) \\
\indent \textbf{for} n = 0 to N-1 \textbf{do}\\
 \hskip2cm Calculate $H^{n+1}, I^{n+1}, K_I^{n+1}, K_U^{n+1}, D^{n+1}$ and $V^{n+1}$ according to (\ref{51})--(\ref{56})
\vskip1pt
\indent \textbf{end for}\\
\rule{15.3cm}{.03cm}
\end{algorithm}

The model is new, so we experimented with the values of the various parameters, using `reasonable' values, to obtain the different possible types of behavior of the system. Since our main interest is to eradicate cancer, we do not dwell on the EE state, nor on the various other types of behavior generated during the simulations. The parameters and coefficients used here are shown in Table \ref{tab:1}.

\subsection{Simulations}
\label{sec:dfe}

First, we consider the system without any treatment.  Using the parameter values in Table \ref{tab:1} together with the initial values $H_0 = 300$, $I_0 = 250$, $K_{I0} = 0$, $K_{U0} = 500$, $D_0 = 0$ and $V_0 = 0$ leads to the results depicted in Figure \ref{fig:NoTreatment250707}.  In the presence of cancer, and without treatment, immune cells decrease rapidly;  initially, the number of healthy cells grows and then decreases to zero, while the cancer increases to its carrying capacity of $K_U=1000$. The model predicts the spread of cancer to its maximal possible extent. 

Next, we consider each treatment on its own.  Applying only chemotherapy, we let $\varphi_D = 2000$ for 1 hour once a week. Figure \ref{fig:ChemoOnly250707} shows that the cancer is eliminated on day 28, after three chemotherapy treatments.  When cancer is eliminated, the treatment is discontinued.   Healthy and immune cells are seen to return to their steady state values, $H=1000$ and $I=500$, indicating a complete recovery from the disease. 

Figure \ref{fig:ImmunoOnly250707} shows the system when only immunotherapy is applied.  This is represented in the system as $\psi$, and we use $\psi = 302.5$. Again, the treatment is applied weekly for an hour at a time.  The results indicate that the cancer is eradicated on day 26, after three immunotherapy treatments.  

We complete the set of single treatments with virotherapy, represented in the system as $\varphi_V$.  The virotherapy is also applied once a week for an hour and is set as $\varphi_V = 350$.  Figure \ref{fig:ViroOnly250707} shows that the cancer disappeared on day 34, after four virotherapy treatments.

With these three basic separate treatments of the system established, one could test different combinations of treatments in an effort to maximize the effectiveness and minimize the discomfort to the patient, and other side effects.  

\begin{table}[H]
\label{tab:1}
\begin{minipage}{\linewidth}
\begin{tabular}{l*{6}{c}r}
Parameter    &  Value &  Description  \\
\hline
\hline
$q$ & 1.985 & rate of supply of immune cells\\
$\alpha_1$ & 0.6 & maximal growth rate constant of healthy cells\\
$\alpha_2$ & 1.55 & maximal growth rate constant of immune cells\\
$\alpha_I$ & 0.002 & reciprocal of immune cell carrying capacity\\
$\alpha_H$ & 0.001 & reciprocal of healthy cell carrying capacity\\
$\alpha_3$ & 1.1125 & maximal growth rate constant of tumor cells\\
$\alpha_{KU}$ & 0.001 & reciprocal of uninfected cell carrying capacity\\
$\gamma_1$ & 0.001 & loss rate of healthy cells due to uninfected tumor cells\\
$\gamma_2$ & 0.0107 & loss rate of immune cells due to uninfected tumor cells\\
$\gamma_3$ & 0.015 & loss rate of uninfected cancer cells due to  immune  cells\\
$\gamma_4$ & 0.001 & loss rate of uninfected cancer cells due to normal cells\\
$\mu_1$ & 0.001 & death rate of healthy cells due to drug\\
$\mu_H$ & 0.0005 & natural mortality rate of healthy cells\\
$\mu_2$ & 0.001 & rate at which the drug kills immune cells\\
$\mu_I$ & 0.01 & natural death rate of immune cells\\
$\mu_{KI}$ & 0.01 & natural death rate of infected immune cells\\ 
$\mu_3$ & 0.989 & death rate of uninfected cancer cells due to drug\\
$\mu_{KU}$ & 0.001 & natural death rate of uninfected cancer cells\\
$\mu_D$ & 0.1 & natural decay in vivo of the drug\\
$\mu_V$ & 0.1 & natural decay rate constant of viruses\\
$\delta_1$ & 0.88 & rate constant of infection of cancer by virus\\
$\delta_2$ & 0.001 & weight factor\\
$\delta_3$ & 0.0178 & death rate of infected cancer due to immune cells\\
$\delta_4$ & 0.0178 & death rate of infected cancer due to drug\\
$\beta_1$ & 0.01 & drug decrease rate due to immune cells\\
$\beta_2$ & 0.5 & drug decrease rate due to cancer cells\\
$\beta_3$ & 0.01 & drug decrease constant due to normal cells\\
$\beta_4$ & 0.1 & multiplier of viruses created by infected cells\\
\hline
\end{tabular}
\centering
\captionof{table}{Symbols and the values of the baseline parameters} \label{tab:1}
\vskip4pt
\end{minipage}
\vskip12pt
\end{table}

\begin{figure}[H]
\centering
    \includegraphics[scale = .34]{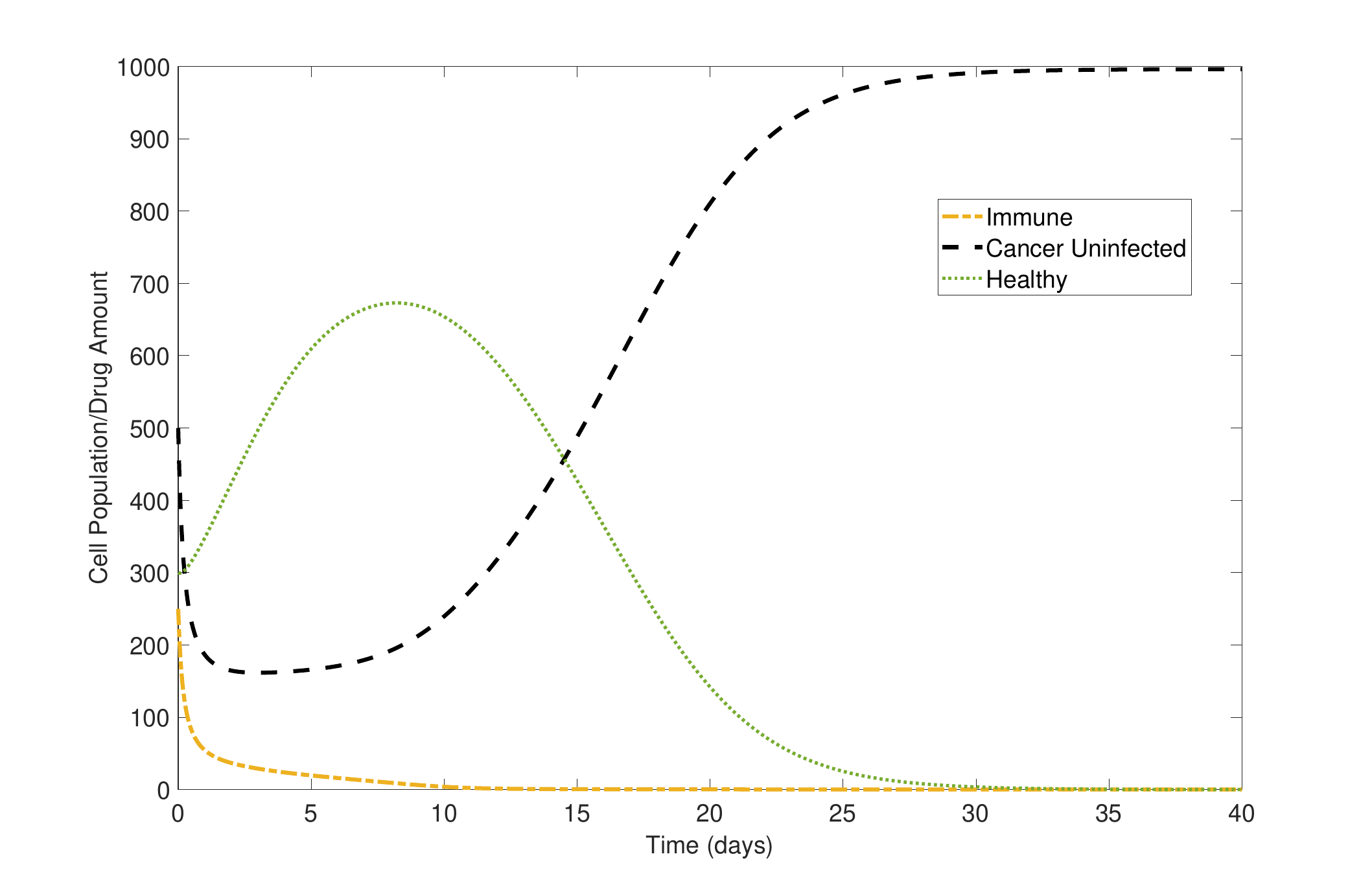}
    \caption{No Treatment. The behavior of the system without any treatment.  Data values are shown in Table 1. Cancer grows until it reaches its carrying capacity. Healthy and immune cells decline as the cancer grows, and without treatment, these cells eventually die out completely.}
    \label{fig:NoTreatment250707}
\end{figure}

\begin{figure}[H]
    \centering
    \includegraphics[scale = .34]{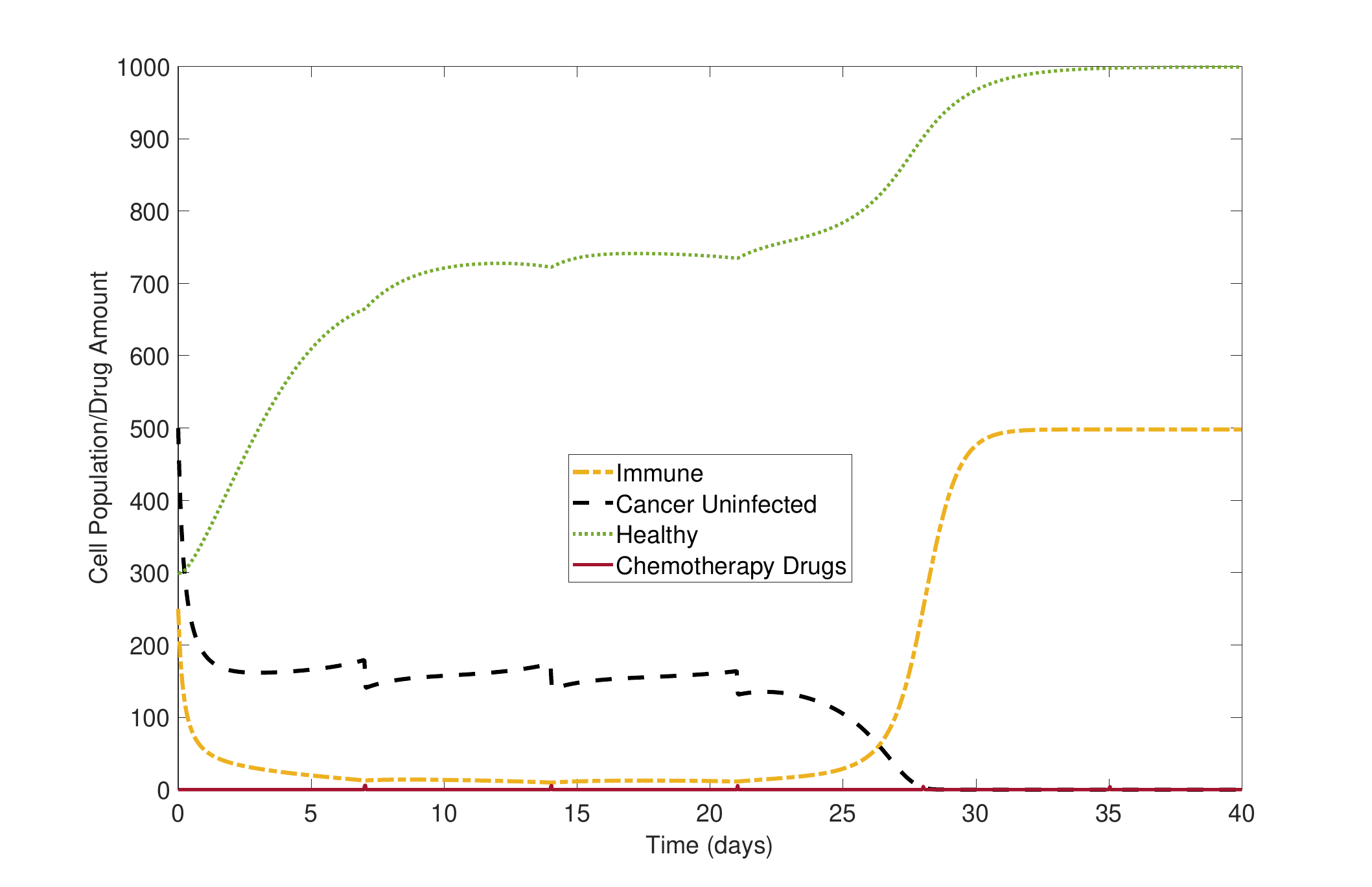}
    \caption{Chemotherapy Weekly. Only chemotherapy is applied in a weekly dose, where $\varphi_D = 2000$ for 1 hour of treatment on days 7, 14, and 21.  The graph indicates that after three rounds of chemotherapy cancer is eliminated on day 28. }
    \label{fig:ChemoOnly250707}
\end{figure}

\begin{figure}[H]
    \centering
    \includegraphics[scale = .35]{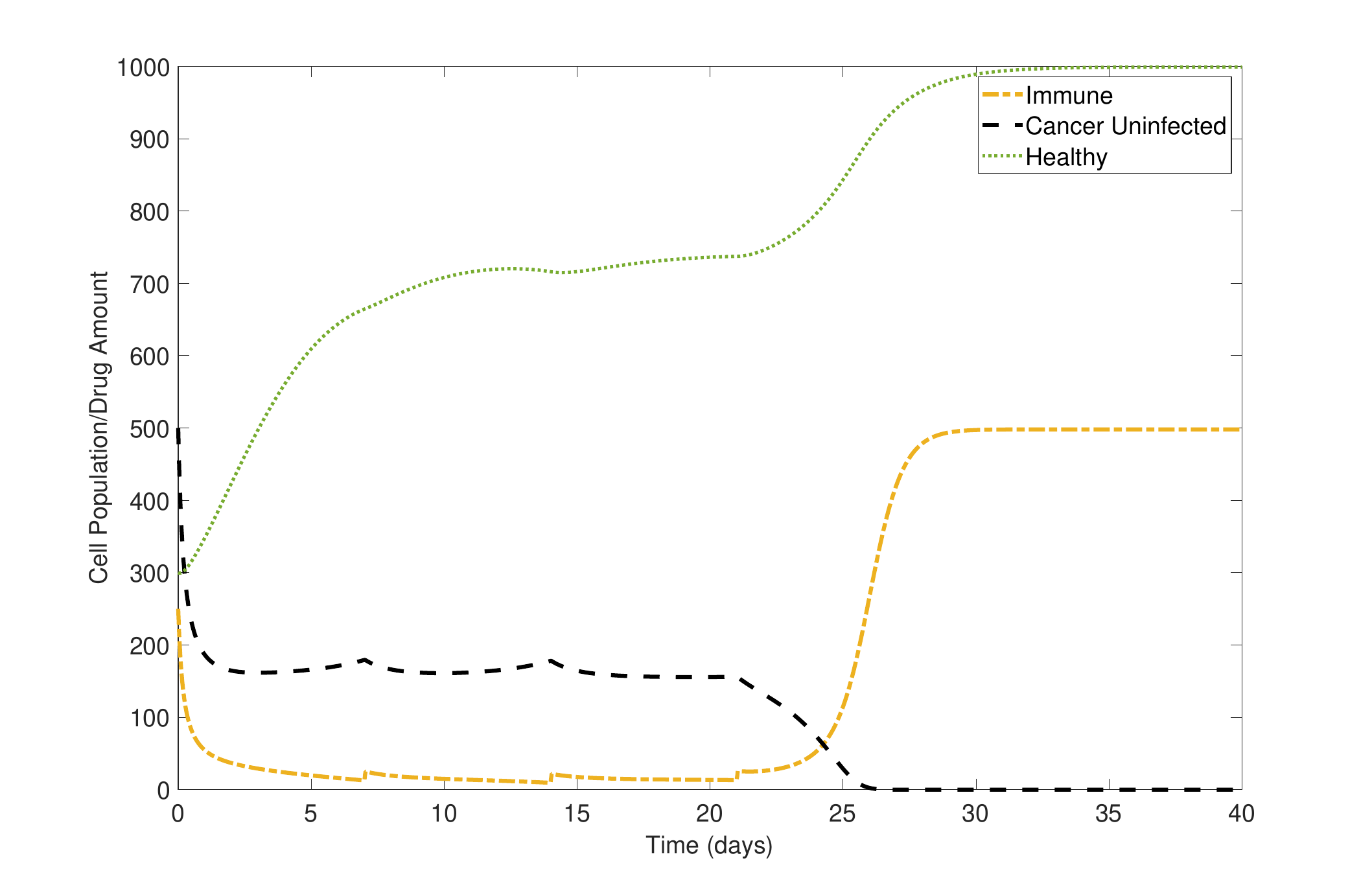}
    \caption{Immunotherapy Weekly. Only immunotherapy is applied in a weekly dose where $\psi = 302.5$ for 1 hour of treatment on days 7, 14, and 21.  Following three rounds of immunotherapy cancer is eliminated on day 26.}
    \label{fig:ImmunoOnly250707}
\end{figure}

\begin{figure}[H]
    \centering
    \includegraphics[scale = .35]{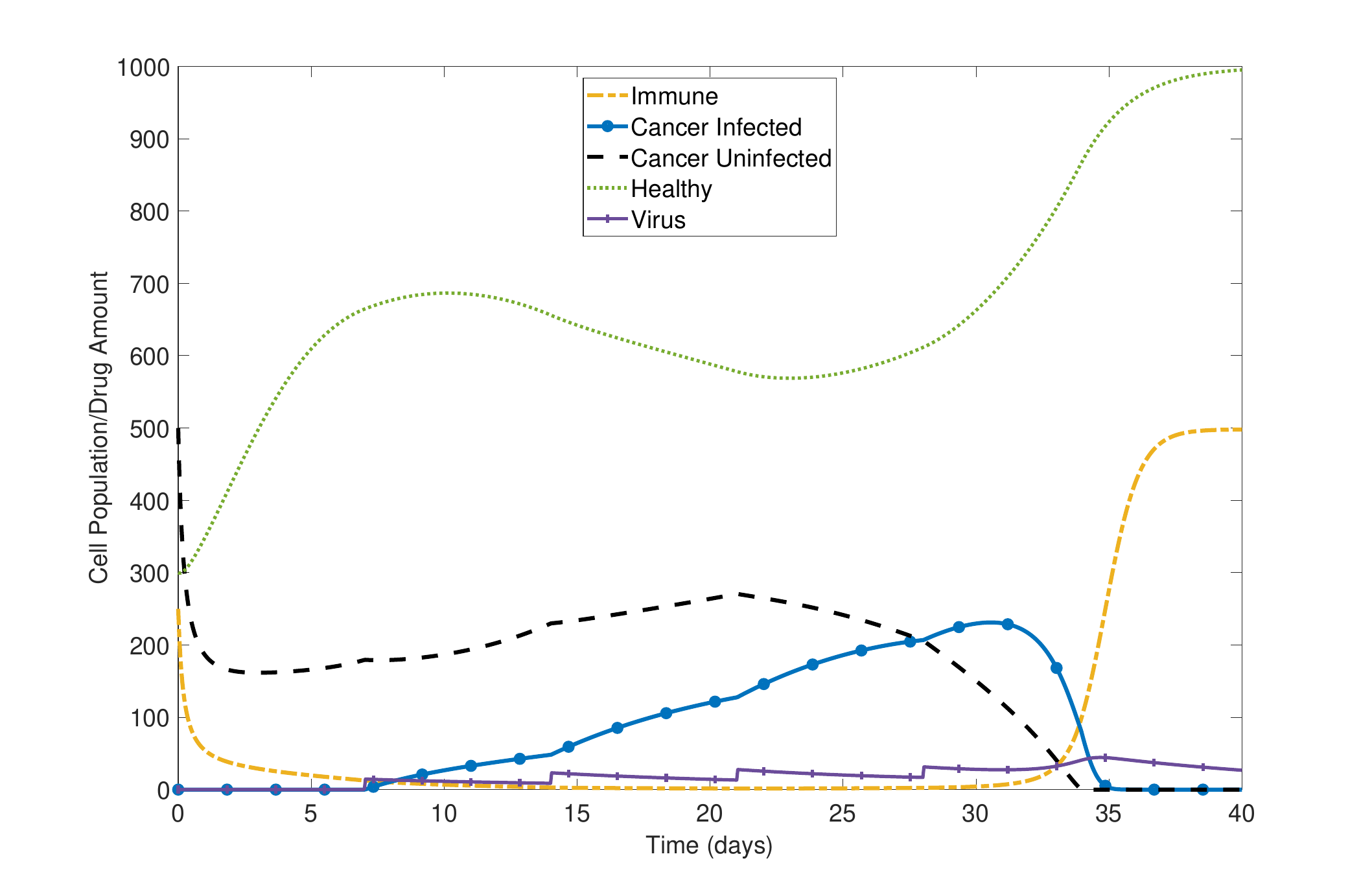}
    \caption{Virotherapy Weekly. Only virotherapy is applied in a weekly dose where $\varphi_V = 350$ for 1 hour of treatment on days 7, 14, 21 and 28.  It is seen that after four rounds of virotherapy cancer disappears on day 34.  }
    \label{fig:ViroOnly250707}
\end{figure}

\subsection{Combinations of treatments }
\label{sec:chemo} 

To further study the model and find better treatment outcomes, we consider the following computer experiment. We apply all three treatments together each week for a few weeks.  We show that applying the three treatments at once eradicates cancer faster than any single treatment.  This scenario is chosen to analyze and show the behavior of the system, not as a probable treatment schedule, as this would undoubtedly be too much for the patient's system to handle.  

We keep the treatment levels the same as is done for each treatment type: 

\[
\varphi_D = 2000, \qquad \psi = 302.5, \qquad \varphi_V = 350.
\]

The simulation results are shown in Figure \ref{fig:ThreeTreatmentsAtOnce250707}. After a single treatment, the cancer vanishes on day 11.  

\begin{figure}[h!]
    \centering
    \includegraphics[scale = .35]{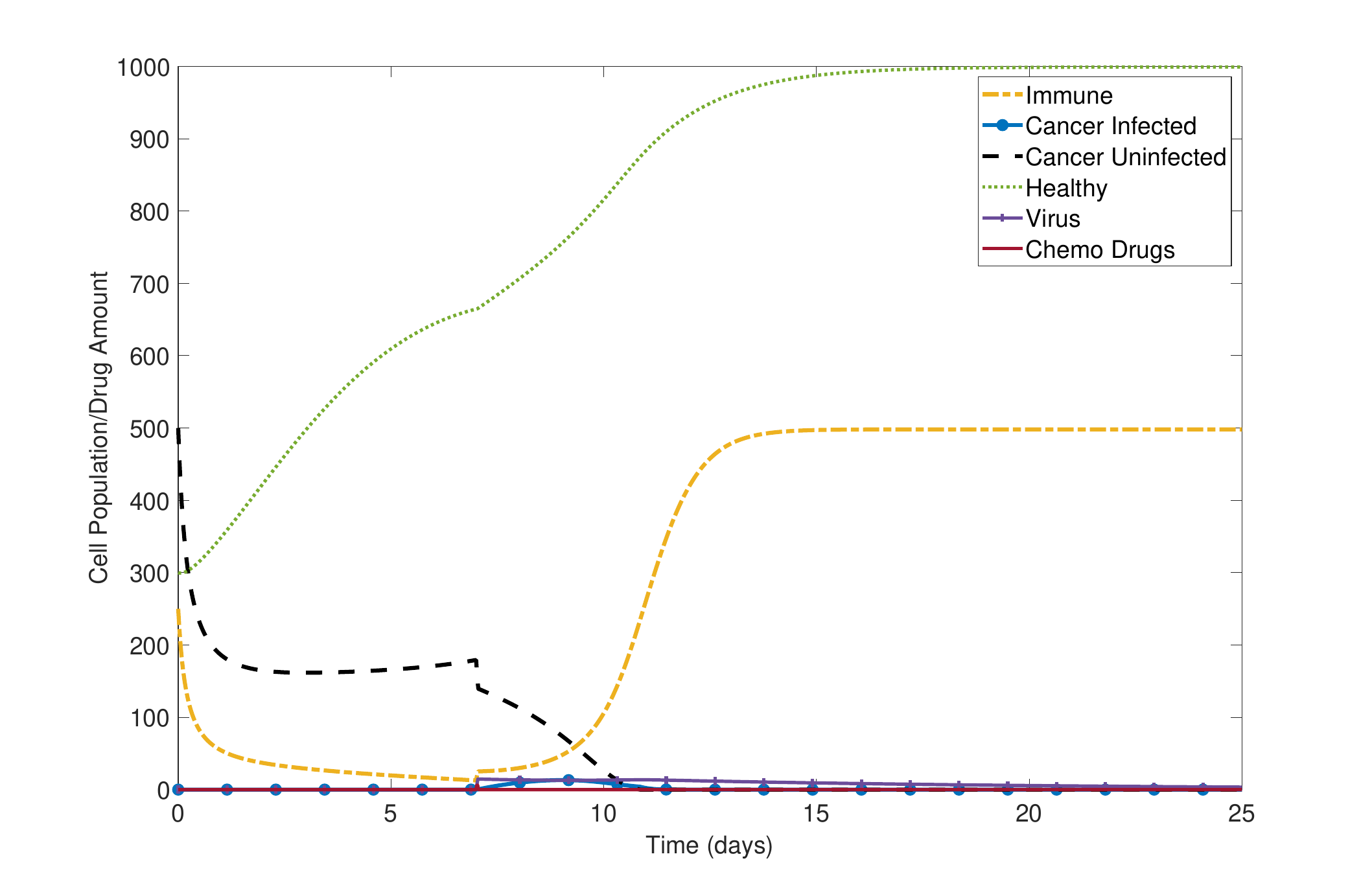}
    \caption{Three Treatments at Once. Weekly administered chemotherapy, immunotherapy, and virotherapy. Cancer (black -- uninfected cells) vanishes ($K_U<1$) after the first round, on day 11; then the healthy and immune cell populations increase monotonically towards their steady states.}
    \label{fig:ThreeTreatmentsAtOnce250707}
\end{figure}

To further study the model, we run simulations in which we applied the treatments in a cycle with three-day intervals. 
We use the same constants as given in Table \ref{tab:1}.  Figure \ref{fig:VIC2DaysBetween} shows the case where the first virotherapy treatment is applied on day 7, followed by immunotherapy on day 10, chemotherapy on day 13, and back to virotherapy on day 16. It is seen that cancer vanishes by day 17.  It appears from this graph that cancer is already declining following the second immunotherapy treatment.  In Figure \ref{fig:VIC2DaysBetween1RoundEach} we explore the results of only applying one round of virotherapy treatment and a round of immunotherapy treatment three days later.  The effect is that the cancer remains until day 18.  Although this is only a difference of a day, it illustrates that modifying treatments affects the rate at which cancer is eliminated. It also shows that in this case, just two treatments were enough the kill all the cancer cells. This may be of interest, since in this case only one round of virotherapy and one round of immunotherapy were used, which is likely to be very helpful to patients and save money and effort. We note that these results are similar to those in the first case of the combination of all three treatments.

\begin{figure}[h!]
    \centering
    \includegraphics[scale = .35]{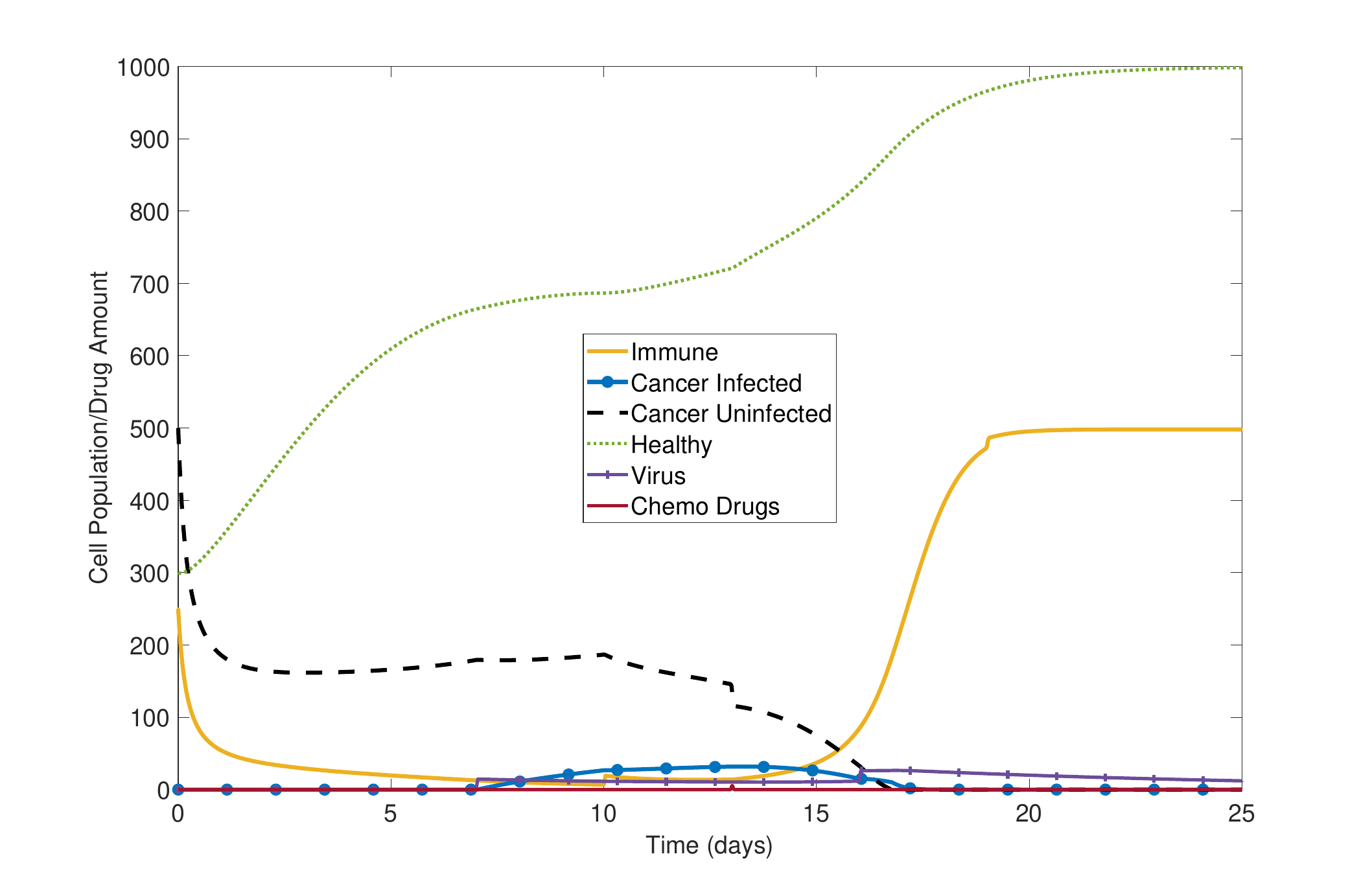}
    \caption{Treatments on a Three-Day Cycle. Treatments administered on a three-day cycle, alternating among virotherapy, immunotherapy, and chemotherapy. Cancer (black -- uninfected cells) vanishes ($K_U<1$) after the second round of virotherapy, on day 17.}

\label{fig:VIC2DaysBetween}
\end{figure}

\begin{figure}[h!]
    \centering
    \includegraphics[scale = .35]{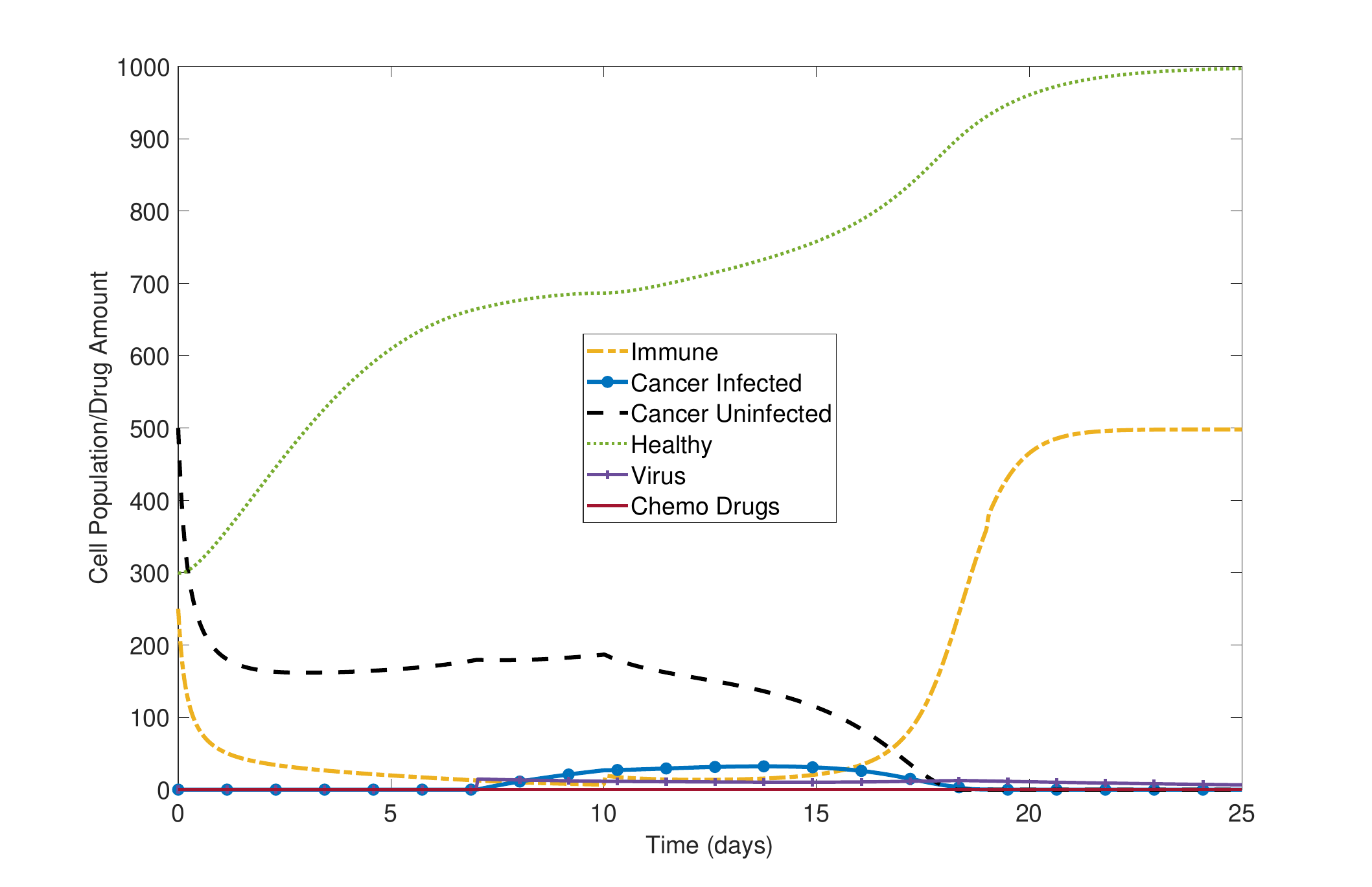}
    \caption{Treatments on a Three-Day Cycle, One Round Each. The three-day cycle is halted after the immunotherapy treatment to observe the effects of stopping treatment as soon as the cancer begins to decline. Compared to Figure \ref{fig:VIC2DaysBetween}, it takes a little over one additional day for cancer to vanish.}

\label{fig:VIC2DaysBetween1RoundEach}
\end{figure}

In other experiments with different combinations of the three treatments, but with the same treatment amounts, we obtained very similar results, in which the cancer vanished between 15 and 18 days. It seems that the model can predict the sequencing and time intervals well.

{\bf We note that some of the system coefficients are arbitrary, not based on data, and that the disappearance of the cancer in two or three weeks is the models artifact. However, if medically possible, these results indicate that combinations of treatments may be more effective than each one separately. }

\subsection{Numerical Convergence}
\label{sec:conv}
To check the convergence of the numerical algorithm, we run the case where virotherapy, immunotherapy, and chemotherapy are applied individually in that order on a three-day cycle, with the following time steps
\[
\Delta t= 10^{-4},\quad 5\times 10^{-5},\quad 10^{-5},\quad 5\times 10^{-6},\quad 10^{-6},\quad  5\times 10^{-7}.
\]
To show the numerical convergence of the model, we run the simulations with $\Delta t=5\times 10^{-7}$ and assume it to be the accurate solution. Then, we measured the maximal distance of the other numerical solutions from this solution. The results are given in Table \ref{tab:2}. 

\begin{table}[H]
\begin{minipage}{\linewidth}
\begin{tabular}{l*{7}{c}r}
$\Delta t$    &  $10^{-4}$ & $5\times10^{-5}$ & $10^{-5}$ & $5\times10^{-6}$ \\
\hline
\hline
$\|H_{\Delta t} - H_{\Delta t^{*}}\|_{\infty}$ & 5.39 (0.54\%) & 3.39 (0.34\%) & 0.50 (0.05\%) & 0.31 (0.031\%) \\
$\|I_{\Delta t} - I_{\Delta t^{*}}\|_{\infty}$ & 17.70 (3.6\%) & 11.12 (2.2\%) & 1.66 (0.33\%)& 1.00 (0.2\%)\\
$\|K_{I\Delta t} - K_{I\Delta t^{*}}\|_{\infty}$ & 2.78 (8.8\%) & 1.80 (5.7\%)& 0.28 (0.88\%) & 0.17 (0.54\%) \\
$\|K_{U\Delta t} - K_{U\Delta t^{*}}\|_{\infty}$ & 4.37 (0.87\%) & 2.75 (0.55\%) & 0.41 (0.082\%) & 0.25 (0.05\%)\\
$\|D_{\Delta t} - D_{\Delta t^{*}}\|_{\infty}$ & 0.20 (3.6\%) & 0.10 (1.8\%)& 0.02 (0.36\%) & 0.01 (0.18\%) \\
$\|V_{\Delta t} - V_{\Delta t^{*}}\|_{\infty}$ & 0.26 (0.98\%) & 0.13 (0.49\%)& 0.02 (0.075\%) & 0.01 (0.038\%) \\
\hline
\end{tabular}
\centering
\captionof{table}{Numerical convergence. The percentages given in the table compare the max norm to the maximum value that variable achieves using the reference time step, $\Delta t^{*}=5\times 10^{-7}$.   As the time step decreases, the difference of the solution at the given time step from the one with 
$\Delta t^{*}$ decreases notably. } \label{tab:2}
\vskip4pt
\end{minipage}
\vskip12pt
\end{table}
It is seen that as the time step decreases, the accuracy of the computations increases. Indeed, the maximum or $l_{\infty}$ norm decreases substantially. Indeed, comparing the error in the solutions with $\Delta t =10^{-4}$ with those with $\Delta t =10^{-5}$, or those with $\Delta t =5\times 10^{-4}$ with $\Delta t =5\times 10^{-5}$ seems to indicate that the scheme exhibits \underline{linear convergence}. This observation leads to our confidence in the numerical scheme and the simulations.

\section{Conclusions}
\label{sec:conclude}
We construct a dynamical systems model for the treatment of a cancer tumor using combinations of chemotherapy, immunotherapy, and virotherapy. The model consists of six ODE rate equations for Healthy, Immune, Uninfected Cancer, and Infected cell populations, and for the dynamics of the chemo drug and virotherapy.  In the model, we divide cancer cells into those infected by the virus, which are considered not dangerous because they cannot grow and multiply, and the uninfected, which are the disease. This model provides insight into and possible quantitative estimates of the effects of the three therapies administered alone or in combination. Its main goal is to be a tool for computer experiments that aim to improve treatment schedules.  When validated, the model can be used to optimize therapy schedules, therefore saving time, effort, and money, as well as reducing side effects while improving outcomes. Moreover, the model can be easily extended to include other treatment modalities and more complex and realistic settings. 

The model is presented in Section 2; its analysis is provided in Section 3, where the positive invariance, boundedness, and existence of the solutions can be found. The steady states of the system are studied in Section 4. It is seen that the DFE is stable and attracting, since all the eigenvalues of the Jacobian matrix are negative. However, in our setting, the stability of the DFE is of little interest, since the aim is to find conditions that {\it drive} the system to the DFE as fast as possible.

An important aspect of this work can be found in Section 5, where we construct and use a relatively simple explicit time-stepping finite-differences algorithm for the model simulations. The results show the dynamics of chemo-,  immuno-, or viro- therapies, and the conditions for the eradication of cancer.

In the simulations, we first study the case without medical intervention, Fig. 1, and as expected, the uninfected cancer cells grow to their carrying capacity, while the other cells die out. Then, in Figs. 2-4,  we study the cases where only chemotherapy,  only immunotherapy, or only virotherapy are applied. In all cases, therapies are applied weekly. The model predicts that in the chemotherapy case the uninfected cells die in 28 days, after three applications of the drug. In the immunotherapy case, they die in 26 days, again after three applications. In the virotherapy case, the uninfected cells die in 34 days, after four treatments. 

Fig. 5 shows the results of the application of all three therapies at the same time, and it is seen that the cancer is eradicated in 11 days, after a single application of each therapy. This is a very important prediction of the model.

To better understand how the model responds to different scenarios, in Figs. 6 and 7 we study a specific treatment regimen, alternating virotherapy, immunotherapy, and chemotherapy over three-day intervals, and examine the impacts of terminating  treatments before the cancer is gone.  When the cycle continues until the cancer is gone, it takes about 17 days for cancer to disappear.  When the treatment cycle is terminated as soon as the cancer appears to be declining, the cancer takes 18 days to be eliminated.  This indicates that the model has the potential to fine-tune the treatments.  Furthermore, the model predicts that the various combination of treatments result in the vanishing of cancer in about two weeks.

We conclude, tentatively, that in the model, whereas by itself chemotherapy eradicates cancer in about 28 days, immunotherapy in 26 days, and virotherapy in 43 days, their combined application eradicates it in 11 days. Similar results are obtained in the case where treatments are alternated. This result must be taken with caution, as the model is not yet validated.  This, when validated, can lead to considerable improvement in treatment outcomes in terms of patient care, substantial reduction in cost and time, better environment for patients, and overall use of medical facilities. {\bf We note, as above, that the model does no take into account any counter indications of using combinations of the treatments. However, the current medical practice does not mix them. Moreover, the model seem to overestimate how quickly the tumor vanishes, which is an artifact of the choice of some of the system parameters.}

It is found that as the time step decreases, the accuracy of the scheme simulations increases. It seems that the convergence is linear. However, the precise proof is unavailable, yet.

Furthermore, this work opens the door to a host of related models and their computer experiments. Moreover, some of the topics of interest are as follows.
\begin{itemize}
    \item Perform more extensive simulations of chemotherapy, immunotherapy, and virotherapy,  combination treatments and their interactions. In particular, with different schedules for the administration of the treatments.
    \item Add other possible treatments, modalities and methods, such as radiation, to the model.
    \item Perform parameter sensitivity analysis, to find out which are the important parameters.
\end{itemize}

We believe that the model, and its variants, will find ways to be applied in real world cancer treatments.

{\bf Statement on authors' contributions}. First author (TK Dutta): Conceptualization, methodology and some writing; Second Author: (SA Sangma): Some formal analysis and investigation. Third author: (J Moore): Software development, computer simulation, analysis and writing; Fourth author (M Shillor): Model development, analysis and writing.  

{\bf Declarations of interest}: none

{\bf Acknowledgment}. The fourth author (MS) is grateful to the Fulbright Foundation
for its support of his visit to Assam Don Bosco University (ADBU), Assam, India, during August 2024. He is also very grateful for the warm hospitality he experienced during his three-week stay there.

 The authors are very grateful to the two anonymous reviewers for the thorough reviews which improved the presentation.

\section{Appendix}
As noted in Section 4, we present the study of the endemic equilibrium (EE)  here. However, since the DFE is stable and attracting, we expect the EE to be unstable. We use the Jacobian matrix (\ref{J}). Let $(H^*,I^*,K_I^*,K_U^*,D^*,V^* )$ be the EE point. We are interested in the case when $K_U\geq 1$, hence  $\chi_1(K_U)=1$. Moreover, in the EE state,
there are no interventions, hence $\psi=\varphi_D=\varphi_V=0$. Then, the system is (\ref{41}),

The equilibrium system (omitting the stars), is the following:
\begin{eqnarray}
0&=&\alpha_1 H(1-\alpha_H H) -\gamma_1 K_UH-\mu_1 DH-\mu_{H}H,\notag \\
0 &=& q+\alpha_2I(1-\alpha_I I) - \gamma_2 K_UI -\mu_2 DI -\mu_{I}I,\notag  \\
0 &=& \frac{\delta_1 K_U V}{\delta_2 +K_U}  -\delta_3 K_I I -\delta_4 K_I D
-\mu_{K_I}K_I, \notag \\
0  &=&\alpha_3 K_U(1-\alpha_{K_U} (K_I+K_U))-\frac{\delta_1K_UV}{\delta_2+K_U} -\gamma_3 K_UI-\gamma_4 K_UH  \nonumber \\
&-&\mu_3 K_UD -\mu_{K_U}K_U, \label{A1}  \\
0 &=&-\beta_{1} ID-\beta_{2}HD-\beta_{3}(K_U+K_I)D-\mu_{D} D, \nonumber\\
0 &=& \beta_{4}(\delta_3 I +\delta_4 D)K_I -\mu_{V} V.  \nonumber
\end{eqnarray}

Next, the Jacobian matrix at the EE (stars omitted) is 
\[
J(H, I, K_I, K_U, D, V)  =
\]
\begin{equation}
\label{J}
\begin{bmatrix}
J_{11} & 0 & 0& -\gamma_1 H & -\mu_1 H &0 \\
0 &J_{22}& 0&-\gamma_2 I & -\mu_2 I&0 \\
0& -\delta_3K_I&J_{33}& \frac{\delta_1\delta_2 V}{(\delta_2+K_U)^2}&-\delta_4K_I&\frac{\delta_1K_U}{\delta_2+K_U}\\
-\gamma_4K_U& -\gamma_3 K_U& -\alpha_3\alpha_{K_U}K_U & J_{44}& -\mu_3 K_U & -\frac{\delta_1 K_U}{\delta_2+K_U} \\
-\beta_2 D & -\beta_1 D & -\beta_3 D & -\beta_3 D&J_{55}& 0\\
0&\beta_4\delta_3K_I,&\beta_4\delta_3I+\beta_4\delta_4D& 0&\beta_4\delta_4K_I& -\mu_V
\end{bmatrix},
\end{equation}

where the terms $ J_{11}-J_{55}$ are given by:
\begin{eqnarray*}
J_{11} &=& \alpha_1 (1 - 2\alpha_H H) - \gamma_1 K_U - \mu_1 D - \mu_H,\\
J_{22} &=& \alpha_2 (1 - 2\alpha_I I) - \gamma_2 K_U - \mu_2 D - \mu_I,\\
J_{33} &=&     - \delta_3 I - \delta_4 D-\mu_{KI} ,\\
J_{44} &=& (\alpha_3 -\alpha_3\alpha_{KU}K_I-2\alpha_3\alpha_{KU}K_U)
-\frac{\delta_1\delta_2V}{(\delta_2+K_I)^2} -\gamma_3I-\gamma_4H-\mu_3D-\mu_{KU},\\
J_{55}&=& -\beta_2H-\beta_1I-\beta_3(K_I+K_U)-\mu_D.
\end{eqnarray*}

The characteristic equation is 
\begin{equation}
\label{73}
\lambda^6+M_1 \lambda^5+M_2 \lambda^4+M_3 \lambda^3+M_4 \lambda^2+M_5 \lambda+M_6=0, 
\end{equation}

Where,

$M_1=-{\rm Trace}(J)=-(J_{11}+J_{22}+J_{33}+J_{44}+J_{55}+J_{66})$, 

$M_2$ is the sum of all principal $2\times 2$ minors of $J$, 

$-M_3$ is the sum of all principal $3\times 3$ minors of $J$, 

$M_4$ is the sum of all principal $4\times 4$ minors of $J$,

$-M_5$ is the sum of all principal $5\times 5$ minors of $J$,

$M_6=Det (-J)$.
\medskip

Now, we apply the Routh-Hurwitz criteria to find the stability conditions of the 
characteristics equation.
Routh-Array:

\[
\begin{array}{ccccc}
 \lambda^6 & 1& M_2& M_4 & M_6 \\
  \lambda^5   & M_1& M_3&M_5 &0 \\[2pt]
  \lambda^4 & b_1=\frac{M_1M_2-M_3}{M_1}& b_2=\frac{M_1M_4-M_3}{M_1}& M_6& 0\\[2pt]
  \lambda^3 & c_1=\frac{b_1M_3-M_1b_2}{b_1} & c_1=\frac{b_1M_5-M_1M_6}{b_1} &0&0\\[2pt]
  \lambda^2 & d_1=\frac{c_1b_2-b_1c_2}{c_1}&M_6 &0&0\\[2pt]
  \lambda^1& e_1=d_1c_2-c_1d_2&0&0&0\\[2pt]
  \lambda^0 &M_6&0&0&0
\end{array}
\]

According to the Routh criterion, the first column must all be positive, that is,
\begin{eqnarray*}
 && M_1>0, \\[2pt]
 && b_1=\frac{M_1M_2-M_3}{M_1}>0, \\[2pt]
 && c_1=\frac{b_1M_3-M_1b_2}{b_1}>0,\\[2pt]
 && d_1=\frac{c_1b_2-b_1c_2}{c_1}>0, \\[2pt]
 &&  e_1=\frac{d_1c_2-c_1M_6}{d_1}>0,\\[2pt]
 && M_6>0.
\end{eqnarray*}

We conclude that when all six conditions are satisfied, the EE is asymptotically stable.
If one of the conditions is reversed, the system remains unstable.
 
\end{document}